\UseRawInputEncoding

\documentclass[letterpaper,11pt]{article}
\usepackage[margin=1in]{geometry}  

\usepackage{listings}
\usepackage{amsmath}
\usepackage{amsthm}
\usepackage{tikz}
\usepackage{caption}
\usepackage{array}
\usepackage{mdwmath}
\usepackage{multirow}
\usepackage{mdwtab}
\usepackage{eqparbox}
\usepackage{multicol}
\usepackage{amsfonts}
\usepackage{tikz}
\usepackage{multirow,bigstrut,threeparttable}
\usepackage{amsthm}
\usepackage{array}
\usepackage{bbm}
\usepackage{subfigure}
\usepackage{epstopdf}
\usepackage{mdwmath}
\usepackage{mdwtab}
\usepackage{eqparbox}
\usepackage{tikz}
\usepackage{latexsym}
\usepackage{cite}
\usepackage{amssymb}
\usepackage{bm}
\usepackage{amssymb}
\usepackage{graphicx}
\usepackage{mathrsfs}
\usepackage{epsfig}
\usepackage{psfrag}
\usepackage{setspace}
\usepackage[
CJKbookmarks=true,
bookmarksnumbered=true,
bookmarksopen=true,
colorlinks=true,
citecolor=red,
linkcolor=blue,
anchorcolor=red,
urlcolor=blue
]{hyperref}
\usepackage{algorithm}
\usepackage{algpseudocode}
\usepackage{stfloats}

\theoremstyle{plain}
\newtheorem{theorem}{Theorem}

\newtheorem{lemma}{Lemma}

\newtheorem{corollary}{Corollary}

\newtheorem{question}{Question}
\newtheorem*{question*}{Question}
\newtheorem{assumption}{Assumption}
\newtheorem{proposition}{Proposition}

\def \bP {\mathbb{P}}
\def \bE {\mathbb{E}}

\def \bR {\mathbb{R}}

\newcommand{\calA}{{\mathcal{A}}}

\newcommand{\calN}{{\mathcal{N}}}

\newcommand{\calU}{{\mathcal{U}}}
\newcommand{\calV}{{\mathcal{V}}}

\newcommand{\calX}{{\mathcal{X}}}

\newcommand{\stepa}[1]{\overset{\rm (a)}{#1}}
\newcommand{\stepb}[1]{\overset{\rm (b)}{#1}}
\newcommand{\stepc}[1]{\overset{\rm (c)}{#1}}
\newcommand{\stepd}[1]{\overset{\rm (d)}{#1}}

\begin{document}

\title{Geometric Lower Bounds for Distributed Parameter Estimation under Communication Constraints}

\author{Yanjun Han, Ayfer \"{O}zg\"{u}r, Tsachy Weissman\thanks{Y. Han, A. \"{O}zg\"{u}r, and T. Weissman are with the Department of Electrical Engineering, Stanford University, email: \url{{yjhan,aozgur,tsachy}@stanford.edu}}}

%


%
\maketitle

\begin{abstract}
We consider parameter estimation in distributed networks, where each sensor in the network observes an independent sample from an underlying distribution and has $k$ bits to communicate its sample to a centralized processor which computes an estimate of a desired parameter. We develop lower bounds for the minimax risk of estimating the underlying parameter for a large class of losses and distributions. Our results show that under mild regularity conditions, the communication constraint reduces the effective sample size by a factor of $d$ when $k$ is small, where $d$ is the dimension of the estimated parameter. Furthermore, this penalty reduces at most exponentially with increasing $k$, which is the case for some models, e.g., estimating high-dimensional distributions. For other models however, we show that the sample size reduction is re-mediated only linearly with increasing $k$, e.g. when some sub-Gaussian structure is available. We apply our results to the distributed setting with product Bernoulli model, multinomial model, Gaussian location models, and logistic regression which recover or strengthen existing results.

Our approach significantly deviates from existing approaches for developing information-theoretic lower bounds for communication-efficient estimation. We circumvent the need for strong data processing inequalities used in prior work and develop a geometric approach which builds on a new representation of the communication constraint. This approach allows us to strengthen and generalize existing results with simpler and more transparent proofs.
\end{abstract}

\tableofcontents


\section{Introduction}
Statistical estimation in distributed settings has gained increasing
popularity motivated by the fact that modern data sets are often distributed across multiple machines and processors,
and bandwidth and energy limitations in networks and within
multiprocessor systems often impose significant bottlenecks
on the performance of algorithms. There are also an increasing
number of applications in which data is generated in a
distributed manner and it (or features of it) are communicated
over bandwidth-limited links to central processors \cite{boyd2011distributed,balcan2012distributed,daume2012protocols,daume2012efficient,dekel2012optimal}. A notable example is the federated learning \cite{mcmahan2017communication}, where multiple entities collaborate in solving a machine learning problem under the coordination of a central server, and communication could be a primary bottleneck since wireless links and other end-user internet connections typically operate at low rates and can be potentially expensive and unreliable; see \cite{kairouz2019advances} for an overview. 

In this paper, we consider general distributed statistical estimation problems under communication constraints, and focus on the impact of a finite-communication budget per sample on the final estimation accuracy. More formally, consider the following parameter estimation problem
\begin{align*}
	X_1, X_2, \cdots, X_n \overset{i.i.d.}{\sim} P_\theta
\end{align*}
where we would like to estimate $\theta\in\Theta\subseteq \bR^d$ under some general loss function $L$ such as the $\ell_1$ or $\ell_2^2$ loss. Unlike the traditional setting where $X_1,\cdots,X_n$ are directly available to the estimator, we consider a distributed setting where each observation $X_i$ is available at a different sensor and has to be communicated to a central estimator by using a communication budget of $k$ bits. We consider a general interactive communication model known as the blackboard communication protocol $\Pi_{\mathsf{BB}}$ \cite{kushilevitz1997communication}: all sensors communicate via a publicly shown blackboard while the total number of bits each sensor can write in the final transcript $Y$ is limited by $k$. Note that when one sensor writes a message (bit) on the blackboard, all other sensors can see the content of the message. We assume that public randomness is available in the blackboard communication protocol. The main motivation for considering a blackboard communication protocol is that it models arbitrary interaction between the nodes. Impossibility results proven under this assumption provide insights about whether communication protocols that make (better) use  of interaction can potentially lead to better performance than those achieved by simple schemes, e.g. simultaneous protocols. We note that impossibility results are strongest when proven under this most flexible communication model. We also consider a weaker family of protocols called the simultaneous message passing protocols (denoted by $\Pi_{\mathsf{SMP}}$), where each sensor independently sends $k$ bits to the centralized processor. 

Under both models, the central sensor needs to produce an estimate $\widehat{\theta}$ of the underlying parameter $\theta$ from the the $k$-bit observations $Y^n$ it collects at the end of the communication. Our goal is to jointly design a communication protocol in $\Pi$ (which is either $\Pi_{\mathsf{BB}}$ or $\Pi_{\mathsf{SMP}}$) and the estimator $\widehat{\theta}(\cdot)$ so as to minimize the worst case risk, i.e. to characterize the following \emph{distributed minimax risk}
$$
R^\star(n,k,\Theta,\Pi) \triangleq \inf_{\Pi}\inf_{\widehat{\theta}}\sup_{\theta\in\Theta} \bE_{\theta}[L(\theta,\widehat{\theta})].
$$
In this paper, lower bounds of the distributed minimax risk will typically be shown under the stronger blackboard communication protocol, while upper bounds of the same order will be attainable under the weaker simultaneous message passing protocol. 

The main contributions of our paper are as follows:
\begin{enumerate}
	\item For a large class of statistical models, we develop a novel geometric approach that builds on a new representation of the communication constraint to establish information-theoretic lower bounds for distributed parameter estimation problems. Our approach circumvents the need for strong data processing inequalities, and relate the experimental design problem directly to an explicit optimization problem in high-dimensional geometry.
	\item Based on our new approach, we show that the communication constraint reduces the effective sample size from $n$ to $n/d$ for $k=1$ under mild regularity conditions, where $d$ is the dimension of the parameter to be estimated. Moreover, for general communication budget $k$, our new approach enables us to show that the penalty is at most exponential in $k$. 
	\item Our new approach also reveals that the tight dependence of the distributed minimax risk on $k$ is determined by different geometric inequalities in different statistical models. Our result recovers the linear dependence on $k$ when some sub-Gaussian structure is available, e.g., in the Gaussian location model. However, in models with heavier tails such as the high-dimensional distribution estimation model, we show that the exponential dependence on $k$ becomes tight. 
\end{enumerate}


\subsection{Related Work}
Distributed parameter estimation and learning, under communication or privacy constraints, have been considered in many recent works. Early work \cite{zhang2013information,shamir2014fundamental,garg2014communication,braverman2016communication,xu2017information} established strong data-processing inequalities to prove tight lower bounds of distributed minimax risk under communication constraints. In particular, they showed that for communication-constrained Gaussian mean estimation, the distributed minimax risk depends linearly on $k$ under the blackboard communication protocol. Similar approaches were also used to obtain minimax risks under privacy constraints \cite{duchi2013local,kairouz2016discrete,duchi2018minimax}. 

Due to the nature of strong data-processing inequalities, the above procedure typically leads to lower bounds linear in $k$ (or the squared privacy parameter). However, for certain statistical estimation problems such as the high-dimensional distribution estimation, this dependence may not be tight. For example, a tight exponential dependence on the privacy parameter was established in \cite{ye2018optimal} for discrete distribution estimation, and \cite{diakonikolas2017communication} established the tight total communication budget for this problem under the communication constraint\footnote{However, no full version of \cite{diakonikolas2017communication} with complete proofs is available online at the time of writing.}. A complete characterization of the distributed minimax risk for discrete distributione estimation and general $(n,k)$ was obtained in \cite{han2018distributed} under the simultaneous message passing protocol, with a tight exponential dependence on $k$. Generalization of \cite{han2018distributed} to general statistical models with varying dependence on $k$ was studied in an earlier version of this work \cite{han2018geometric}, but both papers require the usage of the simultaneous message passing procotol (and there was a technical mistake in handling the blackboard communication protocol). All the above work modeled the communication/privacy constraint directly and did not use the strong data-processing inequality; along this direction, a flourishing line of recent research has studied different distributed estimation \cite{acharya2019hadamard,acharya2019communication,acharya2020inference,acharya2020inference2,chen2020} and testing \cite{acharya2020inference,acharya2020inference2,acharya2019inference,acharya2020domain} problems for the discrete distribution model under the simultaneous message passing protocol. A similar Gaussian identity testing problem was also studied in \cite{acharya2020distributed}. 

However, although it is relatively easy to extend the strong data-processing inequality based approach to blackboard communication protocols, it is more difficult to extend the approach based on direct modeling  to interactive protocols $\Pi_{\mathsf{BB}}$. We review some recent work which dealt with interactive communication protocols. Duchi and Rogers \cite{duchi2019lower} established lower bounds for general interactive communication protocols based on machinery in the communication complexity literature, given a total privacy constraint. Barnes et al. \cite{barnes2019lower} studied a quantized Fisher information under blackboard communication protocols and proved a Bayesian lower bound using a continuous prior and the van Trees inequality, which typically requires the usage of the $\ell_2^2$ loss. This Fisher information based approach has been extended to distributed estimation under local differential privacy constraints in \cite{barnes2020privacy} recovering the results of \cite{duchi2019lower}  in the case of the $\ell_2^2$ loss. For the identity testing, there are three recent papers Amin et al. \cite{amin2020pan}, Berrett and Butucea \cite{berrett2020locally}, and Acharya et al. \cite{acharya2020interactive} which established the sharp statistical rates under interactive communication protocols, where they focused on the discrete distribution estimation model and sequential interactive protocols. Sequential communication protocols are stronger than $\Pi_{\mathsf{SMP}}$ but weaker than $\Pi_{\mathsf{BB}}$, as samples are encoded in a sequential fashion and the encoding of the sample $i$ can decode on the messages transmitted by sensors $1,\dots,i-1$. In this paper, we extend the results of  \cite{han2018geometric} to blackboard communication protocols $\Pi_{\mathsf{BB}}$ (fixing the mistake in \cite{han2018distributed} and \cite{han2018geometric}) via a generalization of the idea in \cite{acharya2020interactive}, and therefore extending the approach of \cite{acharya2020interactive} to a broader family of statistical models and the blackboard communication protocol. 

We also compare with a recent paper \cite{acharya2020general} which appears after our submission. Both papers build upon the framework presented for the discrete setting in \cite{acharya2020interactive}, and have similar assumptions and results on the high-dimensional estimation problem with communication constraints. We also point out some differences. In terms of scope, \cite{acharya2020general} also studied the privacy constraints, and its latest version analyzed sparse estimation models in more detail. In terms of assumptions, an exact orthogonality assumption of likelihood ratios is required in \cite{acharya2020general}, whereas our likelihood ratio condition could be viewed as an approximate version which enables us to study the logistic regression model as well. Finally, in terms of the communication protocol, our blackboard communication protocol is more general than the sequential communication protocol studied in \cite{acharya2020general}, requiring additional effort in handling the tree-based communication protocol.

\subsection{Notation}
For a finite set $A$, let $|A|$ denote its cardinality; $[n]\triangleq \{1,2,\cdots,n\}$; for a measure $\mu$, let $\mu^{\otimes n}$ denote its $n$-fold product measure; lattice operations $\wedge, \vee$ are defined as $a\wedge b=\min\{a,b\}, a\vee b=\max\{a,b\}$; throughout the paper, logarithms $\log(\cdot)$ are in the natural base; $\|P-Q\|_{\mathsf{TV}}$ and $D_{\mathsf{KL}}(P\|Q)$ denote the total variation (TV) distance and Kullback--Leibler (KL) divergence between probability measures $P$ and $Q$, respectively; $\mathsf{Multi}(n;P)$ denotes the multinomial model which observes $n$ independent samples from $P$; for a matrix $A\in \bR^{m\times n}$, $\|A\|_{\mathsf{op}} = \max_{x\in \bR^n: \|x\|_2=1} \|Ax\|_2$ denotes the operator norm; for non-negative sequences $\{a_n\}$ and $\{b_n\}$, the notation $a_n\lesssim b_n$ (or $b_n\gtrsim a_n, a_n=O(b_n), b_n=\Omega(a_n)$) means $\limsup_{n\to\infty} \frac{a_n}{b_n}<\infty$, and $a_n\ll b_n$ ($b_n\gg a_n, a_n=o(b_n), b_n=\omega(a_n)$) means $\limsup_{n\to\infty}\frac{a_n}{b_n}=0$, and $a_n\asymp b_n$ (or $a_n=\Theta(b_n)$) is equivalent to both $a_n\lesssim b_n$ and $b_n\lesssim a_n$.

\subsection{Organization}
The rest of the paper is organized as follows. Section \ref{sec:main} presents our assumptions on the statistical model and two main lower bounds on the distributed minimax risk, which lead to new results or recover existing results in distributed estimation. In Section \ref{sec.blackboard} we introduce the tree representation of the blackboard communication protocol, and sketch the lower bound proof based on this representation. Section \ref{sec.geo} is devoted to the proof of Theorems \ref{thm.general} and \ref{thm.sub-gaussian}, where the key steps are two geometric inequalities. Further discussions are in Section \ref{sec.discussion}, and auxiliary lemmas and the proof of main lemmas are in the appendices.

\section{Main Results}\label{sec:main}
\subsection{Assumptions}
To derive meaningful results in the general minimax formulation, proper assumptions are necessary for the statistical model $(P_\theta)_{\theta\in\Theta\subseteq \bR^d}$ and the loss function $L$. To this end, we begin with the standard regularity condition on $(P_\theta)_{\theta\in \Theta}$. 
\begin{assumption}\label{assump.ULAN}
	The statistical model $(P_\theta)_{\theta\in\Theta}$ is differentiable in quadratic mean at every $\theta\in \Theta$, with the score function $S_{\theta}$ and the Fisher information matrix $I_\theta$.  
\end{assumption}
Note that Assumption \ref{assump.ULAN} is a mild condition commonly used in classical asymptotic statistics \cite{ibragimov2013statistical}, which leads to the asymptotically tight Cram\'{e}r--Rao lower bound for centralized estimation. However, to obtain finite-sample results we need additional assumptions requiring the following notations. For a binary vector $u\in \{\pm 1\}^m$ and $j\in [m]$, let $u^{\oplus j}$ be the vector after flipping the $j$-th coordinate of $u$. Also, for two binary vectors $u, u'\in \{\pm 1\}^m$, let $d_{\mathsf{Ham}}(u,u') = \sum_{j=1}^m \mathbbm{1}(u_j\neq u_j')$ be their Hamming distance. In addition, let $\calX\subseteq \bR^d$ be the common support of all probability measures $(P_\theta)_{\theta\in\Theta}$, and $\calA$ be a generic action space in which the estimator $\widehat{\theta}$ takes value. Finally, by a loss function $L$ we mean a generic non-negative (measurable) function $L: \Theta\times \calA \to \bR_+$. The next assumption is a refinement of Assumption \ref{assump.ULAN} which concerns the finite-sample property of $(P_\theta)_{\theta\in\Theta}$ and $L$. 

\begin{assumption}\label{assump.regularity}
	There exist $1\le d_0\le d$ and a subset of parameters $(\theta_u)_{u\in \{\pm 1\}^{d_0}}\subseteq \Theta$ such that the following conditions hold: 
	\begin{enumerate}
		\item \textbf{Regular grid condition}: For each $u\in \{\pm 1\}^{d_0}$, the $d\times d_0$ matrix $M_u$ with columns $\theta_{u^{\oplus j}} - \theta_u$ ranging over $j\in [d_0]$ has an operator norm at most $2\delta$. 
		\item \textbf{Separation condition}: for any $u,u'\in \{\pm 1\}^{d_0}$, it holds that
		\begin{align}\label{eq:separation}
			\inf_{a\in\calA} \left[ L(\theta_u, a) + L(\theta_{u'},a) \right] \ge \kappa\cdot d_{\mathsf{Ham}}(u,u').
		\end{align}
		\item \textbf{Likelihood ratio condition}: for any $u\in \{\pm 1\}^{d_0}$ and $j\in [d_0]$, it holds that
		\begin{align}\label{eq:approximation}
			\bE_{X\sim P_{\theta_u}}\left[\left|\frac{dP_{\theta_{u^{\oplus j}}}}{dP_{\theta_u}}(X) - 1 - (\theta_{u^{\oplus j}} - \theta_u)^\top S_{\theta_u}(X)\right|^2\right] \le \varepsilon^2. 
		\end{align}
		In addition, it holds that ${dP_{\theta_{u^{\oplus j}}}}/{dP_{\theta_u}}(x)\ge 1/2$ for all $x\in \calX$.
	\end{enumerate}
	If all above conditions hold, we call this statistical estimation problem \textbf{$(d,d_0,\delta,\kappa,\varepsilon)$-regular}. 
\end{assumption}

Assumption \ref{assump.regularity} will be best understood via an important special case. Consider $d_0=d$, and for each $u\in \{\pm 1\}^{d}$, let $\theta_u = \theta_0 + \delta u$ be a local perturbation of some $\theta_0\in \Theta$. Then the regularity grid condition clearly holds as the matrix $M_u$ is diagonal with diagonal entries being $\pm 2\delta$. Therefore, this condition essentially says that the parameters $(\theta_u)_{u\in \{\pm 1\}^{d_0}}$ look like the vertices of a cube with side length $\delta$. The separation condition is standard in applying Assouad or Fano-type arguments to a cube-like hypothesis class \cite{yu1997assouad}, and is fulfilled for many natural loss functions with $\kappa=\kappa(\delta)$ a function of $\delta$. For example, $\kappa(\delta) = 2\delta^p$ when $L = \ell_p^p$, with important special cases including the $\ell_1$ loss when $p=1$, and the mean squared error when $p=2$. The last likelihood ratio condition is motivated by the local expansion
\begin{align*}
	\frac{dP_{\theta+t\cdot h}}{dP_{\theta}}(x) = \exp\left(t\cdot h^\top S_\theta(x) - \frac{t^2}{2}\cdot h^\top I_\theta h + o_{P_\theta}(t^2) \right), \quad \forall h\in \bR^d
\end{align*}
and $e^x = 1+x+x^2/2 + o(x^2)$, as well as the identity $\bE_{X\sim P_\theta}[S_\theta(X)S_\theta(X)^\top] = I_\theta$. In other words, \eqref{eq:approximation} is a quantitative way to approximate the local likelihood ratio by score functions, with the approximation error $\varepsilon = \varepsilon(\delta)$ typically growing with $\delta$. This quantitative condition will help us to show the indistinguishability among the locally perturbed statistical models. We also remark that the likelihood ratio is computed only between two neighboring vertices and often not growing with the dimensionality $d$, thus the lower bound assumption on the local likelihood ratio is not restrictive in high dimensions. Finally, to show that Assumption \ref{assump.regularity} holds for a certain statistical model, typically we first choose a suitable perturbation distance $\delta$ and then work out the parameters $\kappa$ and $\varepsilon$ as functions of $\delta$. 

While the first two conditions of Assumption \ref{assump.regularity} are relatively easier to hold, the last condition may fail for statistical models with an unbounded support (e.g. $\calX = \bR^d$). To mitigate this drawback, we propose a slightly weaker assumption which requires that the likelihood ratio condition holds with a high probability. 

\begin{assumption}\label{assump.regularity_weak}
	Assume the same conditions in Assumption \ref{assump.regularity}, except that in the likelihood ratio condition, there exists some $\calX_0\subseteq \calX$ such that
	\begin{align}\label{eq:approximation_weak}
		\bE_{X\sim P_{\theta_u}}\left[\left|\frac{dP_{\theta_{u^{\oplus j}}}}{dP_{\theta_u}}(X) - 1 - (\theta_{u^{\oplus j}} - \theta_u)^\top S_{\theta_u}(X)\right|^2\cdot \mathbbm{1}(X\in \calX_0)\right] \le \varepsilon^2,
	\end{align}
	and ${dP_{\theta_{u^{\oplus j}}}}/{dP_{\theta_u}}(x)\ge 1/2$ for all $x\in \calX_0$. Moreover, we require that $P_{\theta_u}(\calX_0)\ge 1-\alpha$ for all $u\in \{\pm 1\}^{d_0}$. If the above condition holds, we call this statistical estimation problem \textbf{$(d,d_0,\delta,\kappa,\varepsilon,\alpha)$-approximately-regular}.
\end{assumption}

In all the models considered in this paper, we always have $\alpha=o(n^{-1})$, so the likelihood ratio condition is only violated with a tiny probability, which suffices for our main Theorems \ref{thm.general} and \ref{thm.sub-gaussian}. The next proposition shows that many common statistical models are regular or approximately regular. 
\begin{proposition}\label{prop.assumption}
	For $L=\ell_p^p$ with $p\in [1,\infty)$, the following statements hold: 
	\begin{itemize}
		\item The product Bernoulli model $P_\theta = \prod_{j=1}^d \mathsf{Bern}(\theta_j)$ with $\Theta = [0,1]^d$ is $(d,d,\delta,\kappa,\varepsilon)$-regular with any $\delta\in (0,1/6)$, and $\kappa(\delta) = 2\delta^p, \varepsilon(\delta) \equiv 0$. 
		\item The product Bernoulli model $P_\theta = \prod_{j=1}^d \mathsf{Bern}(\theta_j)$ or the Multinomial model $P_\theta = \mathsf{Multi}(1;\theta)$ with $\Theta = \{(\theta_1,\cdots,\theta_d): \sum_{j=1}^d \theta_j = 1 \}$ is $(d,d/2,\delta,\kappa,\varepsilon)$-regular with any $\delta \in (0,1/(2d))$, and $\kappa(\delta)=2^{2-p}\delta^p$, $\varepsilon(\delta)\equiv 0$. 
		\item The Gaussian location model $P_\theta = \calN(\theta,\sigma^2I_d)$ with $\Theta = \bR^d$ is $(d,d,\delta,\kappa,\varepsilon,\alpha)$-approximately-regular with any $\delta\in (0,c\sigma/\sqrt{\log(nd)})$ for some small constant $c>0$, and $\kappa(\delta) = 2\delta^p$, $\varepsilon(\delta) = O(\delta^2/\sigma^2)$, $\alpha = o(n^{-1})$.
		\item Consider the following logistic regression model $P_\theta$ with random design: the observation vector is $X=(z,y)$, with feature $z\sim \calN(0,I_d)$ and label $y\sim \mathsf{Bern}(1/(1+\exp(-\theta^\top z)))$ given $z$. This model with $\Theta=\{\theta\in\bR^d: \|\theta\|_2\le 1 \}$ is $(d,d,\delta,\kappa,\varepsilon,\alpha)$-approximately-regular with any $\delta\in (0,1/\sqrt{d})$, and $\kappa(\delta)=2\delta^p$, $\varepsilon(\delta)=O(\delta^2)$, $\alpha=o(n^{-1})$. 
	\end{itemize}
\end{proposition}

\subsection{Main Theorems}
Although the previous assumptions give that the local likelihood ratio could be approximated by a linear form of the score function, they do not impose any assumption on the score function itself. As we recall from the classical asymptotic theory that the score function and the Fisher information play central roles in the estimation error, additional properties on the score function will be required to state the minimax lower bound. Our first and general lower bound states that, if the score function has a finite variance along any direction, then the estimation error in the distributed case decays at most exponentially with $k$. 
\begin{theorem}[General lower bound I]\label{thm.general}
	Let the statistical problem be $(d,d_0,\delta,\kappa,\varepsilon,\alpha)$-approximately-regular with 
	\begin{align*}
		I_0 \triangleq \max_{u\in \{\pm 1\}^{d_0}} \max_{v\in \bR^d: \|v\|_2 = 1} \bE_{\theta_u}[(v^\top S_{\theta_u}(X))^2] < \infty. 
	\end{align*}
	Then it holds that
	\begin{align*}
		\inf_{\Pi_{\mathsf{BB}}} \inf_{\widehat{\theta}}\sup_{\theta\in\Theta} \bE_{\theta}[L(\theta,\widehat{\theta})] \ge c\kappa d_0\left[\exp\left(-Cn\left(\frac{2^k\wedge d}{d_0}\cdot I_0\delta^2 + \varepsilon^2  \right)\right) - 4n\alpha\right], 
	\end{align*}
	where the infimum is taken over all possible estimators $\widehat{\theta}=\widehat{\theta}(Y^n)$ and blackboard protocols with $k$-bit communication constraint, and $c,C>0$ are absolute constants independent of $(n,d,k,I_0,d_0,\delta,\kappa,\varepsilon,\alpha)$. 
\end{theorem}

We first show how Theorem \ref{thm.general} could be used to give a meaningful lower bound. Since $\kappa=\kappa(\delta)$ and $\varepsilon=\varepsilon(\delta)$ are typically increasing in $\delta$, as $\delta$ increases, the leading coefficient will be larger while the exponential term will be smaller. To handle this tradeoff, we will choose the largest $\delta>0$ such that the statistical problem remains to be (approximately-)regular, and $\delta^2 = O(d_0/(nI_0(2^k\wedge d)))$. Meanwhile, for this choice of $\delta$, we expect that $\varepsilon=\varepsilon(\delta)$ is at most $O(n^{-1/2})$, which holds in many examples in the next subsection. Finally, for this choice of $\delta$, we conclude that the minimax risk is lower bounded by $\Omega(d_0\kappa(\delta))$. 

We provide several intuitive implications of Theorem \ref{thm.general}. Assume for simplicity that $L = \ell_2^2$, $d_0=d$, $\kappa=\kappa(\delta)=2\delta^2$ and $\varepsilon=\varepsilon(\delta)\equiv 0$, then the choice of $\delta^2 \asymp d/(nI_0(2^k\wedge d))$ in Theorem \ref{thm.general} leads to the lower bound $\Omega(d^2/(nI_0(2^k\wedge d)))$. In the centralized case without any communication constraints, we have $k=\infty$ and therefore the lower bound $\Omega(d/(nI_0))$ for the mean squared error. Since the Fisher information matrix $I_\theta$ satisfies $I_\theta = \bE_{\theta}[S_{\theta}(X)S_{\theta}(X)^\top]$, an equivalent expression of $I_0$ is
\begin{align*}
	I_0 =  \max_{u\in \{\pm 1\}^{d_0}} \lambda_{\max}(I_{\theta_u}), 
\end{align*}
where $\lambda_{\max}$ denotes the largest eigenvalue. As a comparison, the standard Cram\'{e}r--Rao lower bound for the mean squared error is $\Omega(\text{trace}(I_{\theta}^{-1})/n)$ for any $\theta\in\Theta$ \cite{Hajek1972local}. Consequently, Theorem \ref{thm.general} reduces to a weaker but non-asymptotic version of the Cram\'{e}r--Rao lower bound in the centralized case, which often remains rate-optimal when $P_\theta$ is of a product structure. 

Now what happens when there are communication constraints? Using the above result, in the most communication-starved case $k=1$, we have an effective sample size reduction from $n$ to $n/d$. This bound is intuitively achievable by a simple grouping idea: the sensors are splitted into $n/d$ groups, and all $d$ sensors in one group ``simulate" a full $d$-dimensional observation with each sensor working on one coordinate (see, e.g. Proposition \ref{prop.bernoulli}). Therefore, we expect that the dependence on $n,d$ of our lower bound to be tight for $k=1$. When $k>1$, the lower bound $\Omega(d^2/(nI_0(2^k\wedge d)))$ shows that the dependence of the squared $\ell_2$ risk on $k$ cannot be faster than $2^{-k}$, i.e., the penalty incurred by the distributed setting reduces at most exponentially in $k$. In the next subsection we will see examples where this exponential reduction is indeed tight. 

A natural question is that whether or not the exponential dependence on $k$ is always tight. The answer turns out to be \emph{negative}: the above penalty will reduce at most linearly in $k$ when the score function has a sub-Gaussian tail along any direction. Recall that the $\psi_2$-norm of a random variable $X$ is defined by 
$$\|X\|_{\psi_2(P)}=\inf\left\{t>0: \bE_P\left[\exp\left(\frac{X^2}{t^2}\right)\right]\le 2\right\},$$ 
which is the Orlicz norm of $X$ associated with the Orlicz function $\psi_2(x)=\exp(x^2)-1$ \cite{birnbaum1931verallgemeinerung}. There are some equivalent definitions of the $\psi_2$-norm, and $\|X\|_{\psi_2(P)}\le \sigma$ if and only if $X$ is sub-Gaussian under $P$ with parameter $\Theta(\sigma)$ \cite{vershynin2010introduction}. The following theorem shows another lower bound when the score function has a finite $\psi_2$-norm along any direction. 

\begin{theorem}[Lower bound with sub-Gaussian structure]\label{thm.sub-gaussian}
	Let the statistical problem be $(d,d_0,\delta,\kappa,\varepsilon,\alpha)$-approximately-regular with 
	\begin{align*}
		\Sigma_0 \triangleq \max_{u\in \{\pm 1\}^{d_0}} \max_{v\in \bR^d: \|v\|_2 = 1} \|v^\top S_{\theta_u}(X)\|_{\psi_2(P_{\theta_u})}^2 < \infty. 
	\end{align*}
	Then it holds that
	\begin{align*}
		\inf_{\Pi_{\mathsf{BB}}} \inf_{\widehat{\theta}}\sup_{\theta\in\Theta} \bE_{\theta}[L(\theta,\widehat{\theta})] \ge c\kappa d_0\left[\exp\left(-Cn\left(\frac{k\wedge d}{d_0}\cdot \Sigma_0\delta^2 + \varepsilon^2 \right)\right) - 4n\alpha\right], 
	\end{align*}
	where the infimum is taken over all possible estimators $\widehat{\theta}=\widehat{\theta}(Y^n)$ and blackboard protocols with $k$-bit communication constraint, and $c,C>0$ are absolute constants independent of $(n,d,k,\Sigma_0,d_0,\delta,\kappa,\varepsilon,\alpha)$. 
\end{theorem}

Using the similar intuitive analysis, Theorem \ref{thm.sub-gaussian} roughly shows a lower bound $\Omega(d^2/(n\Sigma_0(k\wedge d)))$ for the mean squared error. When the coordinates of the score function $S_\theta(X)$ are independent (which holds when $P_\theta$ is a product distribution), the quantity $\Sigma_0$ is essentially the maximum $\psi_2$ norm for each coordinate. Compared with the lower bound $\Omega(d^2/(nI_0(2^k\wedge d)))$ in Theorem \ref{thm.general}, the new lower bound has a better dependence on $k$ when the score function not only admits a finite variance but also behaves like a Gaussian random variable. However, neither of these bounds is better than the other in general, for it is possible that $I_0 \ll \Sigma_0$; the next subsection will provide examples where each of these bounds is tight. We also remark that the different dependence on $k$ in Theorems \ref{thm.general} and \ref{thm.sub-gaussian} is due to the nature of different geometric inequalities (cf. Lemma \ref{lemma.geometry_1} and Lemma \ref{lemma.geometry_2}) satisfied by general probability distributions and sub-Gaussian distributions. 

\subsection{Applications}
Next we apply Theorems \ref{thm.general} and \ref{thm.sub-gaussian} to some concrete statistical estimation examples. The first corollary concerns the discrete distribution estimation model. 

\begin{corollary}[Discrete distribution estimation]\label{cor.multi}
	Let $P_\theta=\mathsf{Multi}(1;\theta)$ with $\Theta$ being the probability simplex over $d$ elements. For $k\in \mathbb{N}$, $p\in [1,\infty)$, and $n\ge d^2/(2^k\wedge d)$, we have
	\begin{align*}
		\inf_{\Pi_{\mathsf{BB}}} \inf_{\widehat{\theta}}\sup_{\theta\in\Theta} \bE_{\theta}\|\widehat{\theta}-\theta\|_p^p \ge C_p\cdot \frac{d}{(n(2^k\wedge d))^{p/2}}, 
	\end{align*}
	where $C_p>0$ is an absolute constant independent of $n,k,d$.
\end{corollary}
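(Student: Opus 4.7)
The plan is to apply Theorem~\ref{thm.general} to a product sub-family of $\mathsf{Multi}(1;\theta)$ near the uniform distribution $\theta_0=(1/d,\dots,1/d)$. By Proposition~\ref{prop.assumption}, Assumptions~\ref{assump.ULAN} and \ref{assump.var} are already known to hold for the multinomial model at $\theta_0$. To obtain the product structure required by Theorem~\ref{thm.general}, I would pair consecutive coordinates and consider
$$
\theta_u \;=\; \theta_0 + \tfrac{\delta}{\sqrt 2}\sum_{i=1}^{\lfloor d/2\rfloor} u_i(e_{2i-1}-e_{2i}),\qquad u\in\{\pm 1\}^{\lfloor d/2\rfloor}\subset\calM_d.
$$
An observation $X\sim\mathsf{Multi}(1;\theta_u)$ decomposes as a pair index $J=\lceil X/2\rceil$ uniform on $[\lfloor d/2\rfloor]$ (whose law is free of $u$) together with a within-pair Bernoulli bit $\mathbf{1}[X=2J-1]\mid J=j \sim \mathrm{Bern}(1/2+\delta\sqrt d\, u_j/\sqrt 2)$. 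Grouping the $n$ sensors by $J_i$ (either by Poissonization or by conditioning and using multinomial concentration, both compatible with the blackboard protocol since $J_i$ is part of sensor $i$'s own sample) separates the problem into $\lfloor d/2\rfloor$ independent Bernoulli sub-problems, each with $\asymp 2n/d$ samples, which is exactly the product setup of Theorem~\ref{thm.general}.

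Next, I would read off the per-coordinate Fisher information of the sub-family. Using $I(\theta_0)\asymp d(I+\mathbf{1}\mathbf{1}^\top)$ for the multinomial at the uniform point, together with $v_i=(e_{2i-1}-e_{2i})/\sqrt 2\perp\mathbf{1}$, one gets $v_i^\top I(\theta_0)v_i\asymp d$, i.e.\ $I_0\asymp d$. Substituting into Theorem~\ref{thm.general} with effective dimension $\lfloor d/2\rfloor$ and per-coordinate Fisher information $I_0\asymp d$ yields
$$
\inf_{\Pi_{\mathsf{BB}}}\inf_{\hat\theta}\sup_{\theta\in\calM_d}\bE_\theta\|\hat\theta-\theta\|_2^2 \;\gtrsim\; \frac{(d/2)^2}{n(2^k\wedge (d/2))\cdot d} \;\asymp\; \frac{d}{n(2^k\wedge d)},
$$
which equals $d/(n2^k)$ when $2^k\le d$ and $1/n$ when $2^k\ge d$, matching the claimed bound $d/(n2^k)\vee 1/n$ up to absolute constants. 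The sample size requirement $n\ge d^2/(2^k\wedge d)$ transfers verbatim from Theorem~\ref{thm.general}.

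The main obstacle is the reduction from multinomial to a product Bernoulli model inside the interactive blackboard setting. Specifically, one needs to argue that partitioning the sensors by $J_i$ interfaces correctly with the $k$-bit per-sensor budget (it does, since each sensor knows its own $J_i$ before transmitting), and that the event on which all pair-counts concentrate at $\asymp 2n/d$ has probability bounded away from zero in the regime $n\gtrsim d^2$ (this is a standard multinomial tail bound and only costs an absolute constant in the final risk). Beyond this reduction, everything is plug-and-play: Proposition~\ref{prop.assumption} supplies the hypotheses of Theorem~\ref{thm.general}, and the geometric lower bound of Theorem~\ref{thm.general} delivers the estimate above.
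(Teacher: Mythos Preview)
The paper's own argument is a one-line application: Proposition~\ref{prop.assumption} verifies Assumptions~\ref{assump.ULAN} and~\ref{assump.var} for the multinomial model (indeed the likelihood ratio is exactly linear in $U$, so the remainder in \eqref{eq.assumption_2} vanishes), and Theorem~\ref{thm.general} is then invoked with per-coordinate Fisher information of order $d$. No sensor regrouping or reduction to an auxiliary model is performed.

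Your pairing construction is a natural way to produce orthogonal score directions for the multinomial sub-family, and the computation $I_0\asymp d$ is correct. But the step you identify as the crux of the reduction is wrong and, taken literally, throws away the whole communication penalty. You claim that grouping sensors by $J_i$ ``separates the problem into $\lfloor d/2\rfloor$ independent Bernoulli sub-problems, each with $\asymp 2n/d$ samples, which is exactly the product setup of Theorem~\ref{thm.general}.'' It is not: Theorem~\ref{thm.general} concerns $n$ sensors each holding a \emph{full $d$-dimensional} product sample under a $k$-bit budget. After your conditioning, each sensor holds a \emph{single} Bernoulli bit informative about exactly one coordinate $u_{J_i}$; one bit already suffices to transmit it, so the $k$-bit constraint is slack and the resulting lower bound is only the centralized rate $\asymp 1/n$ --- the factor $d/2^k$ disappears. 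Reassembling $d/2$ original sensors into one ``super-sensor'' holding a genuine product sample does not help either: that super-sensor has $kd/2$ bits, so $2^{k'}\gg d'$ and Theorem~\ref{thm.general} again collapses to the centralized rate.

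The fix is to drop the sensor regrouping entirely and run the \emph{proof} of Theorem~\ref{thm.general} on your paired sub-family as it stands. Each sensor still holds its single multinomial draw $X_i$; the likelihood ratio satisfies $\tfrac{dP_{\theta_u}}{dP_{\theta_0}}(X)-1=\delta\,\tilde S_0(X)^{\top}u$ exactly, and the effective scores $\tilde S_0(X)_j=\tfrac{d}{\sqrt 2}\,(-1)^{X+1}\mathbbm{1}(\lceil X/2\rceil=j)$ are mean-zero, pairwise orthogonal under $P_{\theta_0}$, with variance $d$. Hence Lemma~\ref{lemma.geometry_1} applies verbatim with $I_0=d$, yielding $S_1\le 2^k\delta^2 d$ and the bound $\frac{(d/2)^2}{n(2^k\wedge(d/2))\,d}\asymp\frac{d}{n2^k}\vee\frac{1}{n}$, without ever touching the sensor assignment.
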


Using the construction of $\theta_u$'s in Proposition \ref{prop.assumption}, after some algebra one could verify that $I_0 = O(d)$ in Theorem \ref{thm.general}. Consequently, choosing $\delta = c/\sqrt{n(2^k\wedge d)}$ for a small enough constant $c>0$ ensures that $\delta<1/(2d)$ (which is required in Proposition \ref{prop.assumption}), with $\kappa \asymp \delta^p$ and $\varepsilon = 0$, giving the result of Corollary \ref{cor.multi}. For $p\in \{1,2\}$, there is a matching upper bound in \cite{han2018distributed}, showing the tightness of this minimax lower bound. This result also improves over the total communication budget in \cite{diakonikolas2017communication}. Under the sequential communication protocol, the recent paper \cite{acharya2020general} established the same lower bound for $n\ge d^2/(2^k\wedge d)$, as well as a different lower bound $\Omega((n(2^k\wedge d))^{-(p-1)/2})$ for $n<d^2/(2^k\wedge d)$, the tightness of which is currently unclear (this lower bound also follows from Corollary \ref{cor.multi} via replacing $d$ by a smaller quantity $d_{\min}=\sqrt{n(2^k\wedge d)}$ such that $n\ge d_{\min}^2/(2^k\wedge d_{\min})$). Note that in this case, the tight dependence of the minimax risk on $k$ is exponential. 

The next corollary characterizes the distributed minimax risk of mean estimation in the Gaussian location model. 
\begin{corollary}[Gaussian location model]\label{cor.gaussian}
	Let $P_\theta=\calN(\theta,\sigma^2I_d)$ with $\Theta=\bR^d$. For $k\in \mathbb{N}$, $p\in [1,\infty)$ and $n\ge d^2/(k\wedge d)^2 + d\log d/(k\wedge d)$, we have
	\begin{align*}
		\inf_{\Pi_{\mathsf{BB}}} \inf_{\widehat{\theta}}\sup_{\theta\in\Theta} \bE_{\theta}\|\widehat{\theta}-\theta\|_p^p \ge C_p\cdot d\left(\frac{d\sigma^2}{n(d\wedge k)}\right)^{\frac{p}{2}},
	\end{align*}
	where $C_p>0$ is an absolute constant independent of $n,k,d,\sigma^2$.
\end{corollary}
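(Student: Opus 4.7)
The plan is to specialize Theorem~\ref{thm.sub-gaussian} to the Gaussian location model $P_\theta=\calN(\theta,\sigma^2 I_d)$ and compute all the quantities that appear in the theorem statement. By Proposition~\ref{prop.assumption}, Assumptions~\ref{assump.ULAN} and \ref{assump.subGaussian} are both satisfied for this model at every $\theta_0\in\bR^d$, so Theorem~\ref{thm.sub-gaussian} may be applied. By translation invariance we may take $\theta_0=0$, so the one-dimensional coordinate model is $p_\theta=\calN(\theta,\sigma^2)$ with score function $s_0(x)=x/\sigma^2$ and Fisher information $I_0=1/\sigma^2$.

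Next I would select the truncation set $\calX_0$. A natural choice is the axis-aligned box $\calX_0=[-T,T]^d$ with $T\asymp\sigma\sqrt{\log d}$; a standard Gaussian tail estimate combined with a union bound over the $d$ coordinates gives $\inf_{\|\theta-\theta_0\|_\infty\le\delta}P_\theta(\calX_0)\ge 1-d^{-5}$ provided $\delta\lesssim\sigma\sqrt{\log d}$, which is compatible with the local regime where Assumption~\ref{assump.subGaussian} operates. With this choice the conditional measures $q_{\theta_i}$ are truncated Gaussians that are close to $\calN(\theta_i,\sigma^2)$ in total variation up to $d^{-5}$, so the sub-Gaussian parameter of the score $s_0(X)=X/\sigma^2$ under $q_{\theta_0}$ satisfies $\tilde{\sigma}^2\triangleq\|s_0(X)\|_{\psi_2(q_{\theta_0})}^2\asymp 1/\sigma^2$, matching the unconditional $\psi_2$-norm up to absolute constants. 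Similarly, $R=\sup_{x,x'\in\calX_0}\max_{i}|s_0(x_i)-s_0(x_i')|\le 2T/\sigma^2\asymp\sqrt{\log d}/\sigma$, so $(R/\tilde{\sigma})^2\asymp \log d$. Hence the hypothesis $k\ge (R/\tilde{\sigma})^2\vee\log d$ of Theorem~\ref{thm.sub-gaussian} reduces exactly to the assumption $k\ge\log d$ stated in the corollary.

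Plugging these quantities into Theorem~\ref{thm.sub-gaussian} yields
\begin{align*}
\inf_{\Pi_{\mathsf{BB}}}\inf_{\hat{\theta}}\sup_{\theta\in\Theta}\bE_\theta\|\hat{\theta}-\theta\|_2^2 \ge C\cdot\frac{d^2}{n(k\wedge d)\tilde{\sigma}^2}\asymp \frac{d^2\sigma^2}{n(k\wedge d)}.
\end{align*}
The elementary identity $\tfrac{1}{k\wedge d}=\tfrac{1}{k}\vee\tfrac{1}{d}$ rewrites this as $\bigl(\tfrac{d^2}{nk}\vee\tfrac{d}{n}\bigr)\sigma^2$, giving the desired bound. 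The second term $\tfrac{d\sigma^2}{n}$ is already the classical unconstrained minimax risk, so it also follows directly from the fact that a communication constraint can only increase the risk.

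The only non-routine step is the verification that truncation to $\calX_0$ does not distort either the sub-Gaussian constant $\tilde{\sigma}^2$ or the diameter $R$ beyond constants; this is where I would be most careful. Concretely I would check that the density ratio $dq_{\theta_0}/dp_{\theta_0}$ is between $1$ and $1+O(d^{-5})$ on $\calX_0$, so that $\bE_{q_{\theta_0}}[\exp(s_0(X)^2/a^2)]$ differs from its Gaussian counterpart by a negligible amount for $a^2\asymp 1/\sigma^2$, and that the threshold $T\asymp\sigma\sqrt{\log d}$ simultaneously certifies Assumption~\ref{assump.subGaussian} (already delivered by Proposition~\ref{prop.assumption}) and yields $R/\tilde{\sigma}\lesssim\sqrt{\log d}$. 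Once these bookkeeping steps are in place, the corollary is an immediate consequence of Theorem~\ref{thm.sub-gaussian}.
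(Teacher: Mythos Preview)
Your proposal is correct and follows exactly the route the paper takes: the paper simply states that Corollary~\ref{cor.gaussian} follows from Theorem~\ref{thm.sub-gaussian} (with the assumptions verified by Proposition~\ref{prop.assumption}), and your argument fills in precisely those specializations --- $s_0(x)=x/\sigma^2$, $\calX_0=[-C\sigma\sqrt{\log d},C\sigma\sqrt{\log d}]^d$, $\tilde\sigma^2\asymp 1/\sigma^2$, $(R/\tilde\sigma)^2\asymp\log d$ --- together with the identity $\frac{d^2}{n(k\wedge d)}=\frac{d^2}{nk}\vee\frac{d}{n}$. The only caveat is that $(R/\tilde\sigma)^2\asymp\log d$ matches the hypothesis $k\ge\log d$ up to an absolute constant rather than ``exactly'', but the paper treats this the same way.
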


For the Gaussian location model, the score function is $S_{\theta}(x) = (x-\theta)/\sigma^2$, and therefore the assumption of Theorem \ref{thm.sub-gaussian} is fulfilled with $\Sigma_0 = O(1/\sigma^2)$. Consequently, in Theorem \ref{thm.sub-gaussian} we may choose $\delta \asymp \sigma \sqrt{d/(n(d\wedge k))}$ (which satisfies the constraint of Proposition \ref{prop.assumption} by the choice of $n$) with $\kappa\asymp \delta^p$ and $\varepsilon \asymp \delta^2/\sigma^2 = O(n^{-1/2})$ to derive Corollary \ref{cor.gaussian}. Note that for $p=2$, Corollary \ref{cor.gaussian} recover the results in \cite{zhang2013information,garg2014communication}, without logarithmic factors in the risk. Also, in this model, the tight dependence of the minimax risk on $k$ is linear. 

The above two models have different tight dependence on $k$: in Corollary \ref{cor.multi}, when $2^k<d$, we see an effective sample size reduction from $n$ to $n2^k/d$; in Corollary \ref{cor.gaussian}, when $k<d$, we see an effective sample size reduction from $n$ to $nk/d$. This phenomenon may be better illustrated using the following example:

\begin{corollary}[Product Bernoulli model]\label{prop.bernoulli}
	Let $P_\theta=\prod_{i=1}^d \mathsf{Bern}(\theta_i)$. If $\Theta=[0,1]^d$ and $n\ge \frac{d}{d\wedge k}$, we have
	$$
	\inf_{\Pi_{\mathsf{BB}}}\inf_{\widehat{\theta}}\sup_{\theta\in\Theta} \bE_{\theta}\|\widehat{\theta}-\theta\|_2^2 \asymp \frac{d^2}{nk} \vee \frac{d}{n}. 
	$$
	If $\Theta\triangleq \{(\theta_1,\cdots,\theta_d)\subseteq [0,1]^d: \sum_{i=1}^d \theta_i=1 \}$ and $n\ge \frac{d^2}{d\wedge 2^k}$, we have
	$$
	\inf_{\Pi_{\mathsf{BB}}}\inf_{\widehat{\theta}}\sup_{\theta\in\Theta} \bE_{\theta}\|\widehat{\theta}-\theta\|_2^2 \asymp \frac{d}{n2^k} \vee \frac{1}{n}.
	$$
\end{corollary}

The first lower bound follows from Theorem \ref{thm.sub-gaussian}, and it was also obtained in \cite{zhang2013information} under the independent protocol with a matching upper bound. The same lower bound was also obtained in recent papers \cite{acharya2020interactive,acharya2020general}. The second lower bound follows from Theorem \ref{thm.general}, and the upper bound could be obtained using the ``simulate-and-infer" procedure in \cite{acharya2020inference2}. Note that the dependence of the squared $\ell_2$ risk on $k$ is significantly different under these two scenarios, even if both of them are product Bernoulli models: the dependence is linear in $k$ when $\Theta=[0,1]^d$, while it is exponential in $k$ when $\Theta$ is the probability simplex. We remark that this is due to the different behaviors of the score function: in the first case, we have $I_0\asymp \Sigma_0=\Theta(1)$; in the second case, we have $I_0\asymp d\ll d^2\asymp \Sigma_0$. Hence, Theorem \ref{thm.sub-gaussian} utilizes the sub-Gaussian nature and gives a better lower bound in the first case, and Theorem \ref{thm.general} becomes better in the second case where the tail of the score function is essentially not sub-Gaussian.

In addition to mean estimation, our main theorems also provide the following lower bound for parameter estimation in logistic regression. 
\begin{corollary}[Logistic regression]\label{cor.logistic}
	Consider the logistic regression model with random design formulated in Proposition \ref{prop.assumption}. For $k\in \mathbb{N}$, $p\in [1,\infty)$ and $n\ge d^2/(d\wedge k)$, we have
	\begin{align*}
		\inf_{\Pi_{\mathsf{BB}}} \inf_{\widehat{\theta}}\sup_{\theta\in\Theta} \bE_{\theta}\|\widehat{\theta}-\theta\|_p^p \ge C_p\cdot d\left(\frac{d}{n(d\wedge k)}\right)^{\frac{p}{2}},
	\end{align*}
	where $C_p>0$ is an absolute constant independent of $n,k,d$.
\end{corollary}

The proof of Corollary \ref{cor.logistic} follows from Proposition \ref{prop.assumption} and Theorem \ref{thm.sub-gaussian}. Specifically,  in logistic regression, the score function at $x=(z,y)$ is given by
\begin{align*}
	S_\theta(x) = S_\theta(y,z) = \left(y - \frac{1}{e^{-\theta^\top z}+1}\right)z, 
\end{align*}
which satisfies the sub-Gaussian condition of Theorem \ref{thm.sub-gaussian} with $\Sigma_0 = O(1)$ as the scalar parameter of $z$ always lies in $[-1,1]$. Consequently, choosing $\delta \asymp \sqrt{d/(n(d\wedge k))}$, as well as the quantities $\kappa(\delta)=2\delta^p$, $\varepsilon(\delta)=O(\delta^2)=O(n^{-1/2})$, and $\alpha=o(n^{-1})$ given by Proposition \ref{prop.assumption}, in Theorem \ref{thm.sub-gaussian} proves Corollary \ref{cor.logistic}. Note that the above argument only requires the random feature vector $z$ to be sub-Gaussian. For $p=2$, the same result was also proved in \cite{barnes2019minimax} using the van Trees inequality, with a matching upper bound when $z\sim \mathsf{Unif}(\{\pm 1\}^d)$. A similar lower bound for logistic regression under privacy constraint was obtained in \cite[Corollary 4]{duchi2019lower}: although they studied the excess risk instead of the $\ell_2^2$ loss, in the proof they essentially lower bounded the excess risk by the $\ell_2^2$ loss. For $p=2$, their overhead compared with the centralized case is $d/(\varepsilon\wedge \varepsilon^2)$ with average privacy budget $\varepsilon$, while ours is $d/(d\wedge k)$. Also, while the lower bound under the privacy constraint could be attained using private gradient updates (see \cite[Corollary 3.2]{bhowmick2018protection}), it is unknown whether a similar approach works under the communication constraint. 

Finally we look at the distributed mean estimation problem for sparse Gaussian location models. 
\begin{theorem}[Sparse Gaussian location model]\label{thm.sparse}
	Let $P_{\theta}=\calN(\theta,\sigma^2I_d)$ with $\Theta= \{\theta\in \bR^d:\|\theta_0\|\le s\}$ with $s\le d/2$. For $k\in \mathbb{N}$ and $n\ge sd^2\log(d/s)/(k\wedge d)^2$, we have
	\begin{align*}
		\inf_{\Pi_{\mathsf{SMP}}} \inf_{\widehat{\theta}}\sup_{\theta\in\Theta} \bE_{\theta}\|\widehat{\theta}-\theta\|_2^2 \ge C\cdot \left(\frac{sd\log(d/s)}{nk} \vee \frac{s\log(d/s)}{n}\right)\sigma^2
	\end{align*}
	where $C>0$ is an absolute constant independent of $n,d,s,k,\sigma^2$, and $\Pi_{\mathsf{SMP}}$ represents the family of simultaneous message passing protocols. 
\end{theorem}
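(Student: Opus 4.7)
The plan is to combine a Gilbert--Varshamov packing of sparse vectors with the sub-Gaussian geometric machinery underlying Theorem~\ref{thm.sub-gaussian}. By the Gilbert--Varshamov bound, choose $\calV\subset\{0,1\}^d$ consisting of $s$-sparse binary vectors with pairwise Hamming distance at least $s/4$ and $\log|\calV|\ge c_0 s\log(d/s)$ for an absolute constant $c_0>0$. For a scale $\mu>0$ to be chosen below, the family $\{\mu v:v\in\calV\}$ sits inside $\Theta$ and forms a $(\mu\sqrt{s}/2)$-packing in $\ell_2$. Under the uniform prior $V\sim\mathsf{Unif}(\calV)$, nearest-neighbor decoding reduces the minimax $\ell_2^2$ risk to a testing problem:
\[
\inf_{\hat\theta}\sup_{\theta\in\Theta}\bE\|\hat\theta-\theta\|_2^2 \ \ge\ \frac{\mu^2 s}{32}\,\Pr(\hat V\ne V).
\]

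Next I would upper-bound $I(V;Y^n)$ by adapting the geometric argument behind Theorem~\ref{thm.sub-gaussian} to a sparse prior. The Gaussian score $s_0(x)=x/\sigma^2$ is itself Gaussian, hence sub-Gaussian with $\|s_0(X)\|_{\psi_2}^2\asymp 1/\sigma^2$, so the sub-Gaussian inequality behind Theorem~\ref{thm.sub-gaussian} is available in spirit; however, its Rademacher prior $U\sim\mathsf{Unif}\{\pm1\}^d$ must be replaced by a sparse one. A convenient choice is the Bernoulli--Rademacher relaxation $U_i=\epsilon_i\eta_i$ with independent $\epsilon_i\sim\mathsf{Unif}\{\pm1\}$ and $\eta_i\sim\mathsf{Bern}(s/d)$, restricted to $\{\|U\|_0\le s\}$. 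Re-running the proof of Theorem~\ref{thm.sub-gaussian} with this prior should yield the per-sensor bound
\[
I(V;Y_i)\ \le\ C\cdot\frac{k\mu^2 s}{\sigma^2 d},
\]
the new factor $s/d$ appearing because each $U_i$ now has variance $s/d$ rather than $1$. Tensorizing across the $n$ sensors and invoking Fano,
\[
\Pr(\hat V\ne V)\ \ge\ 1-\frac{I(V;Y^n)+\log 2}{\log|\calV|}\ \ge\ 1-C'\cdot\frac{nk\mu^2}{\sigma^2 d\log(d/s)}.
\]
Choosing $\mu^2 = c\,\sigma^2 d\log(d/s)/(nk)$ with $c>0$ small forces $\Pr(\hat V\ne V)\ge 1/2$, yielding the first term $\sigma^2 sd\log(d/s)/(nk)$ of the claimed bound. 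The companion term $\sigma^2 s\log(d/s)/n$ is the classical centralized minimax lower bound for sparse Gaussian mean estimation and applies a fortiori in the distributed setting (or, equivalently, by taking $k\ge d$).

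The hard part is the per-sensor bound $I(V;Y_i)\lesssim k\mu^2 s/(\sigma^2 d)$: transporting the sub-Gaussian geometric inequality used in Theorem~\ref{thm.sub-gaussian} from the product Rademacher prior to a sparse prior. The proof of Theorem~\ref{thm.sub-gaussian} crucially exploits the independence of the coordinates of $U$ to control the MGF-type quantity $\bE_U[\exp(\delta^2 s_0(X)^\top s_0(X'))]$ via Assumption~\ref{assump.subGaussian}; the Bernoulli--Rademacher substitution preserves coordinate independence while replacing $\var(U_i)=1$ by $s/d$, so the chain of inequalities should propagate the factor $s/d$ cleanly. One must re-verify the sub-Gaussian remainder bound in Assumption~\ref{assump.subGaussian} under this Bernoulli rescaling, show that the truncation event $\{\|U\|_0>s\}$ contributes negligibly (this is where the hypothesis $k\ge\log d$ and the sample-size assumption $n\ge sd\log(d/s)/(k\wedge d)$ enter), and verify the geometric inequality at the correct effective log-cardinality $\log|\calV|\asymp s\log(d/s)$. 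A more hands-on alternative is a block decomposition that partitions $[d]$ into $s$ blocks of size $d/s$ with one active coordinate per block and tensorizes per-block lower bounds, but one must then handle the protocol's freedom to allocate its $k$-bit budget non-uniformly across blocks, typically by a uniformization argument over possible bit-allocations.
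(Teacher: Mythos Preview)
Your high-level strategy---a sparse prior combined with Fano and the sub-Gaussian geometric machinery of Theorem~\ref{thm.sub-gaussian}---matches the paper's. But there is a genuine gap: you invoke two \emph{different} priors. The Gilbert--Varshamov code $\calV\subset\{0,1\}^d$ supplies the separation, yet the mutual information is bounded for the unrelated Bernoulli--Rademacher prior $U$. Fano's inequality requires the same prior throughout: you need $I(V;Y)$ with $V\sim\mathsf{Unif}(\calV)$, and no data-processing or comparison argument lets you substitute $I(U;Y)$ for a product-structured $U$ supported elsewhere. The difficulty is real: for $V$ supported on a structureless GV code, the key quantity $\bE_V(\bE_{P_V}p_{i,y}-\bE_{P_0}p_{i,y})^2$ in \eqref{eq.mutual_info} does not factorize over coordinates, and the geometric inequalities of Lemmas~\ref{lemma.geometry_2} and~\ref{lemma.S2} cannot be brought to bear.

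The paper avoids this by dropping the packing entirely. It takes the single prior $U\sim\mathsf{Unif}\{\theta\in\{0,\pm 1\}^d:\|\theta\|_0=s\}$ and applies the distance-based Fano inequality (Lemma~\ref{lemma.fano}), which needs only $\log(|\calU|/N_{\max}(s/5))\gtrsim s\log(d/s)$---an elementary count, no GV required. For the mutual information, the paper conditions on the random support $T$: given $T$, the restriction $U_T$ is uniform on $\{\pm 1\}^T$, so the machinery of Theorem~\ref{thm.sub-gaussian} runs verbatim in dimension $s$; averaging over $T$ then produces the $s/d$ factor because $\bE_T\langle S_0(X)_T,S_0(X')_T\rangle=(s/d)\langle S_0(X),S_0(X')\rangle$, and the Taylor-expansion argument behind Lemma~\ref{lemma.S2} propagates this through the higher-order terms. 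Your Bernoulli--Rademacher idea could be made to work in the same way---its independent coordinates play exactly the role of conditioning on $T$---but only if it is the \emph{sole} prior, paired with the distance-based Fano rather than a separate GV packing. (A minor additional point: ``tensorizing across the $n$ sensors'' is not the right description under the blackboard protocol; what replaces it is the decomposition \eqref{eq.mutual_info} coming from the tree representation and Lemmas~\ref{lemma.cut-paste}--\ref{lemma.total_weight}, which you should invoke explicitly.)
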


Under a different notion of communication cost, \cite{braverman2016communication} proved a lower bound $\Omega(sd\sigma^2/(nk))$, \emph{without} the logarithmic factor $\log(d/s)$, under  blackboard communication protocols. Moreover, under the above notion of communication and $\Pi_{\mathsf{SMP}}$, an upper bound $O(sd\sigma^2\log d/(nk))$, \emph{with} the logarithmic factor, was obtained in \cite{garg2014communication}. Interestingly, the recent paper \cite{acharya2020general} showed that an upper bound of $O(sd\sigma^2/(nk))$, \emph{without} the logarithmic factor, is indeed achievable under the sequential communication protocol. Therefore, Theorem \ref{thm.sparse} shows that the logarithmic factor is unavoidable under the non-interactive communication protocols, so there is a strict separation between the interactive and non-interactive protocols. The existence/non-existence of the logarithmic factor in constrained sparse estimation is an interesting research topic, and has drawn several recent attentions such as \cite{acharya2020estimating, chen2021breaking}. 

Ignoring the issues on the logarithmic factor, we see that as opposed to the logarithmic dependence on the ambient dimension $d$ in the centralized setting, the number of nodes required to achieve a vanishing error in the distributed setting must scale with $d$. Hence, the sparse mean estimation problem becomes much harder in the distributed case, and the dimension involved in the effective sample size reduction (from $n$ to $nk/d$) is the ambient dimension $d$ instead of the effective dimension $s$.

\section{Representations of Blackboard Communication Protocol}\label{sec.blackboard}
The centralized lower bounds without communication constraints simply follows from the classical asymptotics \cite{Hajek1970characterization,Hajek1972local}, thus we devote our analysis to the communication constraints. In this section, we establish an equivalent tree representation of the blackboard communication protocol, and prove the statistical lower bound based on this representation.

\subsection{Tree representation of blackboard communication protocol}
Assume first that there is no public/private randomness, which will be revisited in the next subsection, and thus the protocol is deterministic. In this case, the blackboard communication protocol $\Pi_{\mathsf{BB}}$ can be viewed as a binary tree \cite{kushilevitz1997communication}, where each internal node $v$ of the tree  is assigned a deterministic label $l_v\in [n]$ indicating the identity of the sensor to write the next bit on the blackboard if the protocol reaches node $v$; 
the left and right edges departing from $v$ correspond to the two possible values of this bit and are labeled by $0$ and $1$ respectively.
Because all bits written on the blackboard up to the current time are observed by all nodes, the sensors can keep track of the progress of the protocol in the binary tree. The value of the bit written by node $l_v$ (when the protocol is at node $v$) can depend on the sample $X_{l_v}$ observed by this node (and implicitly on all bits previously written on the blackboard encoded in  the position of the node $v$ in the binary tree). Therefore, this bit can be represented by a binary function $a_v(x)\in \{0,1\}$, which we associate with the node $v$; sensor $l_v$ evaluates this function on its sample $X_{l_v}$ to determine the value of its bit. 

Note that the $k$-bit communication constraint for each node can be viewed as a labeling constraint for the binary tree; for each $i\in [n]$, each possible path from the root node to a leaf node can visit exactly $k$ internal nodes with label $i$. In particular, the depth of the binary tree is $nk$ and there is one-to-one correspondance between all possible transcripts $y\in \{0,1\}^{nk}$ and paths in the tree. Note that a proper labeling of the binary tree together with the collection of functions $\{a_v(\cdot)\}$ (where $v$ ranges over all internal nodes)  completely characterizes all possible (deterministic) communication strategies for the sensors. Under this protocol model, the distribution of the transcript $Y$ is 
\begin{align*}
	\mathbb{P}_{X_1,\cdots,X_n\sim P}(Y=y) = \mathbb{E}_{X_1,\cdots,X_n\sim P}\prod_{v\in \tau(y)} b_{v,y}(X_{l_v})
\end{align*}
where $v\in \tau(y)$ ranges over all internal nodes in the path $\tau(y)$ corresponding to $y\in \{0,1\}^{nk}$, and $b_{v,y}(x)=a_v(x)$ if the path $\tau(y)$ goes through the right child of $v$ and $b_{v,y}(x)=1-a_v(x)$ otherwise. Due to the independence of $X_1,\cdots,X_n$, we have the following lemma which is similar to the ``cut-paste" property \cite{bar2004information} for the blackboard communication protocol: 
\begin{lemma}\label{lemma.cut-paste}
	The distribution of the transcript $Y$ can be written as follows: for any $y\in \{0,1\}^{nk}$, we have
	$$
	\mathbb{P}_{X_1,\cdots,X_n\sim P}(Y=y) = \prod_{i=1}^n \mathbb{E}_{P}[p_{i,y}(X_i)]
	$$
	where $p_{i,y}(x)\triangleq \prod_{v\in \tau(y), l_v=i} b_{v,y}(x)$. 
\end{lemma}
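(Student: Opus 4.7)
The plan is to derive the factorization directly from the expression for $\bP(Y=y)$ stated just before the lemma, using only the tree description of $\Pi_{\mathsf{BB}}$ and the mutual independence of $X_1,\cdots,X_n$ under $P$. The essential point is that once the transcript $y$ is fixed, the path $\tau(y)$ is determined, so the deterministic binary functions $\{b_{v,y}(\cdot)\}_{v\in\tau(y)}$ may be treated as known.

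First I would start from
\[
\bP_{X_1,\cdots,X_n\sim P}(Y=y) \;=\; \bE_{X_1,\cdots,X_n\sim P}\prod_{v\in\tau(y)} b_{v,y}(X_{l_v}),
\]
and regroup the product along the path $\tau(y)$ by the identity of the writing sensor. The $k$-bit labeling constraint ensures that each $i\in[n]$ appears at most (in fact exactly) $k$ times among $\{l_v:v\in\tau(y)\}$, so these labels partition the internal nodes of $\tau(y)$ into $n$ (possibly empty) groups and
\[
\prod_{v\in\tau(y)} b_{v,y}(X_{l_v}) \;=\; \prod_{i=1}^n \prod_{\substack{v\in\tau(y)\\ l_v=i}} b_{v,y}(X_i) \;=\; \prod_{i=1}^n p_{i,y}(X_i).
\]
Each factor $p_{i,y}(X_i)$ is now a deterministic function of $X_i$ alone.

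The final step is to invoke independence: since $X_1,\cdots,X_n$ are i.i.d.\ under $P$ and each $p_{i,y}$ involves only $X_i$, Fubini yields
\[
\bE_{X_1,\cdots,X_n\sim P}\prod_{i=1}^n p_{i,y}(X_i) \;=\; \prod_{i=1}^n \bE_P\bigl[p_{i,y}(X_i)\bigr],
\]
which is the desired identity. There is no substantive obstacle; the only subtlety worth flagging is that the path $\tau(y)$ is a function of the transcript rather than of the raw samples, which legitimizes treating the $b_{v,y}$ as fixed when we condition on the event $\{Y=y\}$. This per-transcript factorization is precisely the ``cut-paste'' structure that the later minimax arguments exploit to reduce an interactive blackboard analysis to per-sensor computations.
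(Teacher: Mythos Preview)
Your proof is correct and matches the paper's approach: the paper does not give a separate proof of this lemma but simply states that it follows from the independence of $X_1,\cdots,X_n$, and your argument is exactly the natural unpacking of that remark—regroup the product along $\tau(y)$ by sensor label and then factor the expectation using independence.
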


The $k$-bit communication constraint results in the following important property: 
\begin{lemma}\label{lemma.total_weight}
	For each $i\in [n]$ and $\{x_j\}_{j=1}^n\in \calX^n$, the following equalities hold: 
	\begin{align*}
		\sum_{y\in \{0,1\}^{nk}} \prod_{j=1}^n p_{j,y}(x_j) = 1, \qquad \sum_{y\in \{0,1\}^{nk}} \prod_{j\neq i} p_{j,y}(x_j) = 2^k. 
	\end{align*}
\end{lemma}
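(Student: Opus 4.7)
The plan is to prove both identities by a straightforward induction on the tree structure, exploiting only the recursive definition of $p_{j,y}$ and the key labeling constraint that every root-to-leaf path contains exactly $k$ internal nodes with label $i$ for each $i\in[n]$. Both statements are purely combinatorial once we unfold the definition: observe that
$$
\prod_{j=1}^n p_{j,y}(x_j) = \prod_{v\in\tau(y)} b_{v,y}(x_{l_v}), \qquad \prod_{j\neq i} p_{j,y}(x_j) = \prod_{v\in\tau(y),\,l_v\neq i} b_{v,y}(x_{l_v}),
$$
so the sums in the lemma are indexed by root-to-leaf paths $\tau(y)$ in the binary tree.

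For the first identity, I would prove by induction from the leaves up that for every subtree rooted at a node $w$,
$$
\sum_{y\in\text{subtree}(w)} \prod_{v\in\tau(y)} b_{v,y}(x_{l_v}) \;=\; 1,
$$
where the sum ranges over all $y$ whose path passes through $w$, and $\tau(y)$ is understood as the portion inside the subtree. The base case (leaf) is the empty product, which equals $1$. For the inductive step at an internal node $w$ with function $a_w$ and label $l_w$, splitting the sum according to whether $\tau(y)$ takes the left or right child and applying the induction hypothesis to each subtree yields $(1-a_w(x_{l_w}))\cdot 1 + a_w(x_{l_w})\cdot 1 = 1$. Applying this at the root gives the first identity.

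For the second identity, I would prove by the same bottom-up induction the stronger invariant that for every subtree rooted at $w$,
$$
\sum_{y\in\text{subtree}(w)} \prod_{v\in\tau(y),\, l_v\neq i} b_{v,y}(x_{l_v}) \;=\; 2^{k_w},
$$
where $k_w$ denotes the number of $i$-labeled internal nodes on any root-to-leaf path within the subtree rooted at $w$ (by the $k$-bit labeling constraint, $k_w$ is independent of which path is chosen). The base case at a leaf is $2^0=1$. At an internal node $w$ with $l_w\neq i$, both subtrees have the same count $k_w$, and by induction the partial sum is $(1-a_w(x_{l_w}))\cdot 2^{k_w} + a_w(x_{l_w})\cdot 2^{k_w}=2^{k_w}$. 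At an internal node $w$ with $l_w=i$, the factor at $w$ is \emph{omitted} from the product, while both subtrees have count $k_w-1$; the sum becomes $2^{k_w-1}+2^{k_w-1}=2^{k_w}$. Applied at the root, where $k_w=k$, this gives $2^k$.

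The main thing to be careful about is the asymmetry between $i$-labeled and non-$i$-labeled nodes in the inductive step for the second identity: at non-$i$-labeled nodes the sum is a convex combination (the $a_w$ and $1-a_w$ weights average to one copy of the subtree value), whereas at $i$-labeled nodes there is no weighting, so both subtree values add up — and this is precisely the point at which the ``missing'' factors in the product $\prod_{j\neq i}$ double the count. Once the invariant $k_w$ is introduced, the induction is clean and the $k$-bit labeling constraint enters only through the consistency of $k_w$ across paths in the subtree, which is essentially the defining property of the protocol tree.
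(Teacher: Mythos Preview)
Your proposal is correct and follows essentially the same approach as the paper: both argue by structural induction on the protocol tree, splitting at the root according to whether its label equals $i$ (giving $2^{k_w-1}+2^{k_w-1}$) or not (giving the convex combination $a_w\cdot 2^{k_w}+(1-a_w)\cdot 2^{k_w}$). The only cosmetic difference is that the paper states a single stronger identity with a generic count $k_i$ and recovers both equalities as the special cases $k_i=0$ and $k_i=k$, whereas you prove the two separately; the inductive mechanism is identical.
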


\subsection{Minimax lower bound}
This subsection is devoted to setting up the proof of the minimax lower bound in Theorems \ref{thm.general} and \ref{thm.sub-gaussian}. To this end, we apply the standard testing argument with the Assouad's lemma \cite{assouad1983deux} to the cube-like distribution family $(P_{\theta_u})_{u\in \{\pm 1\}^{d_0}}$ in Assumptions \ref{assump.regularity} and \ref{assump.regularity_weak}, and arrive at a target quantity to be upper bounded in Section \ref{sec.geo}. This subsection is devoted exclusively to $(d,d_0,\delta,\kappa,\varepsilon)$-regular problems to reflect the main ideas, while the modification to handle approximately regular problems is postponed to the next subsection. 

Let $U\sim \mathsf{Unif}(\{ \pm 1\}^{d_0})$, and write $P_u$ as a shorthand of $P_{\theta_u}$ throughout this section. Given the i.i.d. observations $X_1,\cdots,X_n\sim P_u$ and a communication protocol $\Pi$, let $Q_u$ be the probability distribution of the final transcript $Y\in \{0,1\}^{nk}$. As the final estimator $\widehat{\theta} = \widehat{\theta}(Y)$ is a function of $Y$, the standard separation condition \eqref{eq:separation} with the Assouad's lemma \cite{assouad1983deux} (see also \cite[Theorem 5]{han2019online}) gives that
\begin{align}\label{eq:assouad}
	\bE_U \bE_{Q_U}[L(\theta_U, \widehat{\theta}(Y))] \ge \frac{d_0\kappa}{2}\left(1-\frac{1}{d_0}\sum_{j=1}^{d_0}\bE_U\|Q_U - Q_{U^{\oplus j}}\|_{\mathsf{TV}} \right).
\end{align}
As \cite[Lemma 2.6]{Tsybakov2008} shows that $\|P-Q\|_{\mathsf{TV}}\le 1-\exp(-D_{\mathsf{KL}}(P\|Q))/2$, the above inequality \eqref{eq:assouad} together with the convexity of $x\mapsto \exp(-x)$ implies that
\begin{align}\label{eq:assouad_KL}
	\bE_U \bE_{Q_U}[L(\theta_U, \widehat{\theta}(Y))] &\ge \frac{d_0\kappa}{2}\cdot \frac{1}{2d_0}\sum_{j=1}^{d_0}\bE_U \exp\left(-D_{\mathsf{KL}}(Q_U \| Q_{U^{\oplus j}})\right) \nonumber \\
	&\ge \frac{d_0\kappa}{4}\cdot \exp\left(-\bE_U\left[\frac{1}{d_0}\sum_{j=1}^{d_0}D_{\mathsf{KL}}(Q_U \| Q_{U^{\oplus j}})\right]\right). 
\end{align}
The usage of (a slightly different form of) the inequality \eqref{eq:assouad_KL} is motivated by \cite{acharya2020interactive}, which studies discrete distribution estimation models under the sequential communication protocol. In the sequel, we extend this approach to generic statistical models and fully interactive (blackboard) communication protocols. 

Next we upper bound the average KL divergence in \eqref{eq:assouad_KL} for each given $U\in \{\pm 1\}^{d_0}$. To this end, first we note that it suffices to assume no private/public randomness due to the data-processing property of the KL divergence $D_{\mathsf{KL}}(P\|Q)\le \bE_R[D_{\mathsf{KL}}(P_{|R}\|Q_{|R})]$. Then by Lemma \ref{lemma.cut-paste}, we have
\begin{align*}
	Q_U(y) = \prod_{i=1}^n \bE_{X_i\sim P_U}[p_{i,y}(X_i)]
\end{align*}
for each transcript $y\in \{0,1\}^{nk}$. Consequently, for each $j\in [d_0]$, 
\begin{align*}
	D_{\mathsf{KL}}(Q_U \| Q_{U^{\oplus j}}) &= \sum_{i=1}^n  \sum_{y\in \{0,1\}^{nk}} \left(\prod_{s=1}^n \bE_{X_s\sim P_U}[p_{s,y}(X_s)] \right)\cdot \log \frac{\bE_{X_i\sim P_U}[p_{i,y}(X_i)]}{\bE_{X_i\sim P_{U^{\oplus j}}}[p_{i,y}(X_i)]} \\
	&\stepa{\le} \sum_{i=1}^n  \sum_{y\in \{0,1\}^{nk}} \left(\prod_{s=1}^n \bE_{X_s\sim P_U}[p_{s,y}(X_s)] \right)\cdot \left(\frac{\bE_{X_i\sim P_U}[p_{i,y}(X_i)]}{\bE_{X_i\sim P_{U^{\oplus j}}}[p_{i,y}(X_i)]} -1 \right) \\
	&\stepb{=} \sum_{i=1}^n  \sum_{y\in \{0,1\}^{nk}} \left(\prod_{s\neq i} \bE_{X_s\sim P_U}[p_{s,y}(X_s)] \right)\cdot \left(\frac{(\bE_{X_i\sim P_U}[p_{i,y}(X_i)] - \bE_{X_i\sim P_{U^{\oplus j}}}[p_{i,y}(X_i)])^2}{\bE_{X_i\sim P_{U^{\oplus j}}}[p_{i,y}(X_i)]} \right) \\
	&\stepc{\le} 2\sum_{i=1}^n  \sum_{y\in \{0,1\}^{nk}} \left(\prod_{s\neq i} \bE_{X_s\sim P_U}[p_{s,y}(X_s)] \right)\cdot \left(\frac{(\bE_{X_i\sim P_U}[p_{i,y}(X_i)] - \bE_{X_i\sim P_{U^{\oplus j}}}[p_{i,y}(X_i)])^2}{\bE_{X_i\sim P_{U}}[p_{i,y}(X_i)]} \right) \\
	&\stepd{=} 2\sum_{i=1}^n  \sum_{y\in \{0,1\}^{nk}} \left(\prod_{s\neq i} \bE_{X_s\sim P_U}[p_{s,y}(X_s)] \right)\cdot \frac{(\bE_{X_i\sim P_{U}}[p_{i,y}(X_i)(1 - dP_{U^{\oplus j}}/dP_{U}(X_i)) ])^2}{\bE_{X_i\sim P_{U}}[p_{i,y}(X_i)]}
\end{align*}
where (a) is due to the inequality $\log x\le x-1$, (b) follows from the identity 
\begin{align*}
	\sum_{y\in \{0,1\}^{nk}} \prod_{s=1}^n \bE_{X_s\sim P_U}[p_{s,y}(X_s)] = \sum_{y\in \{0,1\}^{nk}} \bE_{X_i\sim P_{U^{\oplus j}}}[p_{i,y}(X_i)] \cdot \prod_{s\neq i} \bE_{X_s\sim P_U}[p_{s,y}(X_s)] = 1
\end{align*}
given by Lemma \ref{lemma.total_weight}, (c) follows from the likelihood ratio condition in Assumption \ref{assump.regularity}, and (d) is due to a simple change of measure. Consequently, for each realization of $U$ we have
\begin{align}\label{eq:final_bound}
	\frac{1}{d_0}\sum_{j=1}^{d_0} D_{\text{KL}}(Q_U \| Q_{U^{\oplus j}}) &\le \frac{2}{d_0}\sum_{i=1}^n \sum_{y\in \{0,1\}^{nk}} \left(\prod_{s\neq i} \bE_{X_s\sim P_U}[p_{s,y}(X_s)] \right)\cdot  \frac{\|\bE_{X_i\sim P_U}[p_{i,y}(X_i)s_U(X_i)] \|_2^2}{\bE_{X_i\sim P_{U}}[p_{i,y}(X_i)]}, 
\end{align}
where $s_U(x)$ is a $d_0$-dimensional vector of likelihood ratios: 
\begin{align*}
	s_U(x) \triangleq \left(1 - \frac{dP_{U^{\oplus 1}}}{dP_U}(x), \cdots, 1 - \frac{dP_{U^{\oplus d_0}}}{dP_U}(x)\right). 
\end{align*}

To deal with $s_U(x)$, we use the likelihood ratio condition in Assumption \ref{assump.regularity} to write that 
\begin{align*}
	s_U(x) = -M_u^\top S_{\theta_U}(x) + \varepsilon(x),
\end{align*}
where $M_u$ is the matrix appearing in the regular grid condition in Assumption \ref{assump.regularity}, and $\varepsilon(x)$ is some remainder term satisfying that $\bE[\|\varepsilon(X)\|_2^2] \le d_0\varepsilon^2$ for all $X\sim P_U$. Consequently, for the remainder term we have
\begin{align*}
	& \sum_{y\in \{0,1\}^{nk}} \left(\prod_{s\neq i} \bE_{X_s\sim P_U}[p_{s,y}(X_s)] \right)\cdot  \frac{\|\bE_{X_i\sim P_U}[p_{i,y}(X_i)\varepsilon(X_i)] \|_2^2}{\bE_{X_i\sim P_{U}}[p_{i,y}(X_i)]} \\
	&\stepa{\le} \sum_{y\in \{0,1\}^{nk}} \left(\prod_{s\neq i} \bE_{X_s\sim P_U}[p_{s,y}(X_s)] \right)\cdot  \bE_{X_i\sim P_U}[p_{i,y}(X_i)\|\varepsilon(X_i)\|_2^2]\\
	&\stepb{=} \bE_{X_1,\cdots,X_n\sim P_U}\left[\sum_{y\in \{0,1\}^{nk}}\left(\prod_{s=1}^n p_{s,y}(X_s) \right)\cdot \|\varepsilon(X_i)\|_2^2 \right] \\
	&\stepc{=} \bE_{X_i\sim P_U}[\|\varepsilon(X_i)\|_2^2] \le d_0\varepsilon^2,
\end{align*}
where (a) is due to Cauchy--Schwarz, (b) swaps the expectation and sum, and (c) is due to the first identity of Lemma \ref{lemma.total_weight}. As for the main term, since $\|Ax\|_2\le \|A\|_{\mathsf{op}}\|x\|_2$, the regular grid assumption in Assumption \ref{assump.regularity} gives
\begin{align*}
	\frac{\|\bE_{X_i\sim P_U}[p_{i,y}(X_i)\cdot M_u^\top S_{\theta_U}(X_i)] \|_2^2}{\bE_{X_i\sim P_{U}}[p_{i,y}(X_i)]} \le 4\delta^2\cdot \frac{\|\bE_{X_i\sim P_U}[p_{i,y}(X_i) S_{\theta_U}(X_i)] \|_2^2}{\bE_{X_i\sim P_{U}}[p_{i,y}(X_i)]}. 
\end{align*}

Consequently, by the triangle inequality $\|x+y\|_2^2\le 2(\|x\|_2^2 + \|y\|_2^2)$ and Lemma \ref{lemma.total_weight}, the above inequalities together with \eqref{eq:assouad_KL} and \eqref{eq:final_bound} imply that
\begin{align}\label{eq:minimax_final}
	\bE_U \bE_{Q_U}[L(\theta_U, \widehat{\theta}(Y))]  \ge \frac{d_0\kappa}{4}\cdot \exp\left(-\frac{16nS}{d_0}\cdot \delta^2 - 4n\varepsilon^2\right), 
\end{align}
where 
\begin{align}\label{eq:target}
	S \triangleq \max_{u\in \{\pm 1\}^{d_0}}\max_{i\in [n]} \sum_{y\in \{0,1\}^{nk}} \left(\prod_{s\neq i} \bE_{X_s\sim P_u}[p_{s,y}(X_s)] \right)\cdot  \frac{\|\bE_{X_i\sim P_u}[p_{i,y}(X_i)S_{\theta_u}(X_i)] \|_2^2}{\bE_{X_i\sim P_u}[p_{i,y}(X_i)]}. 
\end{align}
Hence, to obtain the final minimax lower bound, it suffices to provide upper bounds of the quantity $S$ in \eqref{eq:target}. This is the main focus of Section \ref{sec.geo}. 

\subsection{Approximately regular problems}
In this subsection we show how to modify the above arguments to work for approximately regular problems. Note that when $\alpha=\omega(n^{-1})$, the lower bounds in Theorems \ref{thm.general} and \ref{thm.sub-gaussian} are negative and thus trivial; in the sequel we always assume that $\alpha=O(n^{-1})=o(1)$. For each $u\in \{\pm 1\}^{d_0}$, let $\widetilde{P}_u(\cdot) = P_u(\cdot \mid \calX_0)$ be the restriction of $P_u$ to the set $\calX_0$, and $\widetilde{Q}_u$ be the distribution of the transcript $Y$ under $X_1,\cdots,X_n\sim \widetilde{P}_u$. Then by the property of $\calX_0$ in Assumption \ref{assump.regularity_weak}, we have
\begin{align*}
	\max_{u\in \{\pm 1\}^{d_0}} \| \widetilde{Q}_u - Q_u \|_{\mathsf{TV}} &\le \max_{u\in \{\pm 1\}^{d_0}} \| \widetilde{P}_u^{\otimes n} - P_u^{\otimes n} \|_{\mathsf{TV}} \\
	&\le n\cdot  \max_{u\in \{\pm 1\}^{d_0}} \| \widetilde{P}_u- P_u\|_{\mathsf{TV}}\\
	&= n\cdot \max_{u\in \{\pm 1\}^{d_0}} P_u(\calX_0^c) \\
	&\le n\alpha. 
\end{align*}
Consequently, applying the triangle inequality to the TV distance in \eqref{eq:assouad} gives that
\begin{align*}
	\bE_U \bE_{Q_U}[L(\theta_U, \widehat{\theta}(Y))] \ge \frac{d_0\kappa}{2}\left(1-\frac{1}{d_0}\sum_{j=1}^{d_0}\bE_U\|\widetilde{Q}_U - \widetilde{Q}_{U^{\oplus j}}\|_{\mathsf{TV}} - 2n\alpha \right).
\end{align*}
Hence, if we could show that the new statistical model $(\widetilde{P}_u)_{u\in \{\pm 1\}^{d_0}}$ is regular with essentially the same parameters in Assumption \ref{assump.regularity}, and that the quantities $I_0$ and $\Sigma_0$ in Theorems \ref{thm.general} and \ref{thm.sub-gaussian} does not change much as we move from $P_u$ to $\widetilde{P}_u$, we could repeat the same analysis in the previous subsection with $(P_U,Q_U)$ replaced by $(\widetilde{P}_U,\widetilde{Q}_U)$ and arrive at the same results. 

To verify Assumption \ref{assump.regularity}, note that the regular grid assumption and separation condition do not depend on the statistical model and thus hold under $\widetilde{P}_u$ as well. For the likelihood ratio condition, note that for all $x\in \calX_0$, we have
\begin{align*}
	\frac{d\widetilde{P}_{u^{\oplus j}}}{d\widetilde{P}_u}(x) = \frac{dP_{u^{\oplus j}}}{dP_u}(x) \cdot \frac{P_{u}(\calX_0)}{P_{u^{\oplus j}}(\calX_0)}. 
\end{align*}
Therefore, the lower bound on the likelihood ratio could be replaced by $(1-\alpha)/2$, only slightly smaller than $1/2$. Moreover, as $dP_{u^{\oplus j}}/dP_u(x)\le 2$ for all $x\in \calX_0$, by triangle inequality
\begin{align*}
	&\bE_{X\sim \widetilde{P}_u}\left[\left|\frac{d\widetilde{P}_{\theta_{u^{\oplus j}}}}{d\widetilde{P}_{\theta_u}}(X) - 1 - (\theta_{u^{\oplus j}} - \theta_u)^\top S_{\theta_u}(X)\right|^2\right] \\
	&\le \frac{2}{1-\alpha}\bE_{X\sim {P}_u}\left[\left|\frac{d{P}_{\theta_{u^{\oplus j}}}}{d{P}_{\theta_u}}(X) - 1 - (\theta_{u^{\oplus j}} - \theta_u)^\top S_{\theta_u}(X)\right|^2\cdot \mathbbm{1}(X\in \calX_0)\right] + 2\cdot \left(\frac{2\alpha}{1-\alpha}\right)^2, 
\end{align*}
therefore the condition \eqref{eq:approximation_weak} implies that the parameter $\varepsilon$ in \eqref{eq:approximation} could simply be replaced by $O(\varepsilon/\sqrt{1-\alpha}+\alpha)$. As $\alpha=o(1)$, the new statistical model becomes regular with essentially the same parameters. 

To compute the new $I_0$ and $\Sigma_0$ under new models, note that for any non-negative function $f$, it holds that 
\begin{align*}
	\bE_{\widetilde{P}_u}[f(X)] \le \frac{1}{P_u(\calX_0)}\cdot \bE_{{P}_u}[f(X)] \le \frac{1}{1-\alpha}\cdot \bE_{{P}_u}[f(X)]. 
\end{align*}
Consequently, the quantities $I_0$ and $\Sigma_0$ in Theorems \ref{thm.general} and \ref{thm.sub-gaussian} for regular problems could be replaced by slightly larger quantities $I_0/(1-\alpha)$ and $\Sigma_0/(1-\alpha)$, respectively.

Combining the above points, the minimax lower bounds for approximately regular problems could be argued in an entirely similar manner as regular problems. 

\section{Lower Bounds via Geometric Inequalities}\label{sec.geo}
In this section, we upper bound the quantity $S$ in \eqref{eq:target} using two different geometric inequalities, and complete the proof of main Theorems \ref{thm.general} and \ref{thm.sub-gaussian}. 

\subsection{Proof of Theorem \ref{thm.general} via Geometric Inequality I}
Note that under a deterministic protocol, each function $p_{i,y}$ only takes value in $\{0,1\}$. Therefore, if we write $\calX_{i,y} = \{x\in\calX: p_{i,y}(x) = 1 \}$, then 
\begin{align*}
	\frac{\|\bE_{X_i\sim P_u}[p_{i,y}(X_i)S_{\theta_u}(X_i)] \|_2^2}{\bE_{X_i\sim P_u}[p_{i,y}(X_i)]} = P_u(\calX_{i,y})\cdot \|\bE_{P_u}[S_{\theta_u}(X) \mid \calX_{i,y}] \|_2^2. 
\end{align*}
Therefore, a quantity of interest is the $\ell_2$-norm of the conditional mean of a random vector $S_{\theta_u}(X)$ restricted to some set $\calX_{i,y}$. This motivates us to ask the following general question: 
\begin{question}\label{question_1}
	For a random vector $X\sim P$ and a target probability $P(A)=t\in (0,1)$, which subset $A\subseteq\calX$ maximizes the $\ell_2$ norm of the vector $\bE[X\mid A]$? What is the corresponding maximum $\ell_2$ norm?
\end{question}

The following lemma presents an answer to Question \ref{question_1} under the assumption that $X$ has a finite second moment along any direction. 
\begin{lemma}[Geometric Inequality I]\label{lemma.geometry_1}
	Assume that $\bE[(u^\top X)^2]\le I_0$ for all unit vector $u\in \bR^d$. Then for any set $A\subseteq \calX$, the following inequality holds:
	\begin{align*}
		\|\bE[X\mid A]\|_2^2 \le I_0\cdot \frac{1}{P(A)}.
	\end{align*}
	Moreover, the RHS could be improved to $I_0\cdot (1-P(A))/P(A)$ if $\bE[X]=0$. 
\end{lemma}

Note that Lemma \ref{lemma.geometry_1} is a dimension-free result: the LHS depends on the dimensionality $d$, while the RHS does not. For a comparison, if we trivially use $\|\bE[X\mid A]\|_2^2\le \bE[\|X\|_2^2\mid A]$, there would be an additional factor of $d$ on the RHS. The key observation in the dimensionality reduction is that the ``independence" between coordinates of $X$ is implied by the condition and needs to be exploited.

Now we have all necessary tools for the proof of Theorem \ref{thm.general}. Applying Lemma \ref{lemma.geometry_1} to the score function $S_{\theta_u}(x)$ in Theorem \ref{thm.general}, we have
\begin{align*}
	\frac{\|\bE_{X_i\sim P_u}[p_{i,y}(X_i)S_{\theta_u}(X_i)] \|_2^2}{\bE_{X_i\sim P_u}[p_{i,y}(X_i)]} \le I_0. 
\end{align*}
Consequently, by Lemma \ref{lemma.total_weight}, it holds that
\begin{align*}
	S \le I_0\cdot \max_{u\in \{\pm 1\}^{d_0}}\max_{i\in [n]}\sum_{y\in \{0,1\}^{nk}} \left(\prod_{s\neq i} \bE_{X_s\sim P_u}[p_{s,y}(X_s)] \right) = I_0\cdot 2^k,
\end{align*}
and plugging this upper bound of $S$ into the minimax lower bound \eqref{eq:minimax_final} completes the proof of one lower bound of Theorem \ref{thm.general}. For the other lower bound independent of $k$, an alternative upper bound of $S$ could be used: 
\begin{align*}
	S&\stepa{\le} \max_{u\in \{\pm 1\}^{d_0}}\max_{i\in [n]} \sum_{y\in \{0,1\}^{nk}} \left(\prod_{s\neq i} \bE_{X_s\sim P_u}[p_{s,y}(X_s)] \right)\cdot \bE_{X_i\sim P_u}[p_{i,y}(X_i)\cdot\|S_{\theta_u}(X_i)\|_2^2] \\
	&\stepb{=}\max_{u\in \{\pm 1\}^{d_0}}\max_{i\in [n]} \bE\left[\sum_{y\in \{0,1\}^{nk}} \left(\prod_{s=1}^n p_{s,y}(X_s) \right)\cdot\|S_{\theta_u}(X_i)\|_2^2\right] \\
	&\stepc{=} \max_{u\in \{\pm 1\}^{d_0}}\max_{i\in [n]} \bE_{X_i\sim P_u}[\|S_{\theta_u}(X_i)\|_2^2] \stepd{\le} dI_0,
\end{align*}
where (a) is due to Cauchy--Schwarz, (b) follows from swapping the expectation and the sum with the expectation taken over i.i.d. $X_1,\cdots,X_n\sim P_u$, (c) is due to Lemma \ref{lemma.total_weight}, and (d) follows from choosing $u=e_1,\cdots,e_d$, the canonical vectors, in the assumption of Theorem \ref{thm.general}. 

\subsection{Proof of Theorem \ref{thm.sub-gaussian} via Geometric Inequality II}
In this section, we provide another upper bound on $\|\bE[X\mid A]\|_2^2$ when $X$ is a sub-Gaussian random variable along any direction. 
\begin{lemma}[Geometric Inequality II]\label{lemma.geometry_2}
	Assume that $\| u^\top X\|_{\psi_2}^2 \le \Sigma_0$ for all unit vector $u\in \bR^d$. Then for any set $A\subseteq \calX$, the following inequality holds:
	\begin{align*}
		\|\bE[X\mid A]\|_2^2 \le \Sigma_0\cdot \log\frac{2}{P(A)}.
	\end{align*}
\end{lemma}

Note that lemma \ref{lemma.geometry_2} presents a dimension-free upper bound again. Compared with Lemma \ref{lemma.geometry_1}, Lemma \ref{lemma.geometry_2} improves the upper bound from $O(\Sigma_0)$ to $O(\Sigma_0 t\log\frac{1}{t})$ for sub-Gaussian random vector $X$, where $t=P(A)$ is the volume of the set $A$ and $\Sigma_0$ is the sub-Gaussian parameter. Lemma \ref{lemma.geometry_2} could be derived from standard arguments of the Talagrand's transportation-cost inequality \cite[Chapter 6]{ledoux2005concentration}, but for completeness we provide two proofs of Lemma \ref{lemma.geometry_2} in the appendix. The first proof directly reduces the problem to one dimension and then makes use of the Orlicz norm condition. The second proof is more geometric when $X$ is exactly Gaussian, where tight constants are obtained for $X\sim \mathsf{Unif}(\{\pm 1\}^d)$ via information-theoretic inequalities, and then the ``tensor power trick" is applied to prove the Gaussian case. 

To move from Lemma \ref{lemma.geometry_2} to an upper bound of the quantity $S$ and therefore Theorem \ref{thm.sub-gaussian}, note that the assumption of Theorem \ref{thm.sub-gaussian} and Lemma \ref{lemma.geometry_2} show that
\begin{align*}
	\frac{\|\bE_{X_i\sim P_u}[p_{i,y}(X_i)S_{\theta_u}(X_i)] \|_2^2}{\bE_{X_i\sim P_u}[p_{i,y}(X_i)]} \le \Sigma_0\cdot \bE_{X_i\sim P_u}[p_{i,y}(X_i)] \log\frac{2}{\bE_{X_i\sim P_u}[p_{i,y}(X_i)]}. 
\end{align*}
Therefore, 
\begin{align*}
	S&\le \Sigma_0\cdot \max_{u\in \{\pm 1\}^{d_0}}\max_{i\in [n]}\sum_{y\in \{0,1\}^{nk}} \left(\prod_{s=1}^n \bE_{X_s\sim P_u}[p_{s,y}(X_s)] \right)\log\frac{2}{\bE_{X_i\sim P_u}[p_{i,y}(X_i)]} \\
	&\stepa{\le} \Sigma_0\cdot \max_{u\in \{\pm 1\}^{d_0}}\max_{i\in [n]}\log\left( \sum_{y\in \{0,1\}^{nk}} \left(\prod_{s=1}^n \bE_{X_s\sim P_u}[p_{s,y}(X_s)] \right)\cdot \frac{2}{\bE_{X_i\sim P_u}[p_{i,y}(X_i)]}\right) \\
	&\stepb{=} \Sigma_0\cdot \max_{u\in \{\pm 1\}^{d_0}}\max_{i\in [n]} \log(2^{k+1}) = (k+1)\Sigma_0\log 2,
\end{align*}
where (a) is due to the first identity of Lemma \ref{lemma.total_weight} as well as the concavity of $x\mapsto \log x$, and (b) is due to the second identity of Lemma \ref{lemma.total_weight}. Plugging this upper bound into the minimax lower bound \eqref{eq:minimax_final} completes the proof of Theorem \ref{thm.sub-gaussian} (the other independent-of-$k$ upper bound of $S$ could be obtained analogously to the last section). 

\section{Discussions}\label{sec.discussion}
\subsection{Some Applications of Geometric Inequalities}
The inequalities in Lemmas \ref{lemma.geometry_1} and \ref{lemma.geometry_2} have some other combinatorial applications related to geometry. We consider the following combinatorial problem on the binary Hamming cube $\Omega=\{\pm1\}^d$:
\begin{enumerate}
	\item Suppose we pick half of the vectors in $\Omega$ and compute the mean $\bar{v}\in \bR^d$, i.e., $\bar{v}=|A|^{-1}\sum_{v\in A} v$ for some $A\subseteq \Omega, |A|=2^{d-1}$, what is the maximum possible $\ell_2$ norm $\|\bar{v}\|_2$?
	\item Suppose we pick $2^{dR}$ vectors in $\Omega$ and compute the mean $\bar{v}\in \bR^d$, where $R\in (0,1)$, what is the dependence of the maximum possible $\ell_2$ norm $\|\bar{v}\|_2$ on $d$ and $R$?
\end{enumerate}
This geometric problem is closely related to the optimal data compression in multiterminal statistical inference \cite{amari2011optimal}. We prove the following proposition:
\begin{proposition}\label{prop.geometry}
	Under the previous setting, we have
	\begin{align*}
		\max_{A\subseteq \Omega: |A|=2^{d-1}} \left\| \frac{1}{|A|}\sum_{v\in A}v \right\|_2 &= 1, \\
		\max_{A\subseteq \Omega: |A|=2^{dR}} \left\| \frac{1}{|A|}\sum_{v\in A}v \right\|_2 &= \sqrt{d}(1-2h_2^{-1}(R)) \cdot (1+o_d(1)), 
	\end{align*}
	where $h_2(\cdot)$ is the binary entropy function defined in Lemma \ref{lemma.h_2}. 
\end{proposition}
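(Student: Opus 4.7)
The statement splits into two parts; I would handle each by pairing an explicit achievability construction with a matching dimension-free upper bound.

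\textbf{Part 1 ($|A|=2^{d-1}$).} Take $A=\{v\in\Omega:v_1=+1\}$, for which $\bar v=e_1$ and $\|\bar v\|_2=1$. For the matching upper bound I would apply Lemma~\ref{lemma.geometry_1} to the product Bernoulli model $p_{\theta_i}(v_i)=(1+\theta_i v_i)/2$ at $\theta_0=0$: $P_{\theta_0}$ is uniform on $\Omega$, the one-dimensional score is $s_0(v)=v$ with Fisher information $I_0=1$, and $P_{\theta_0}(A)=1/2$, so the lemma yields $\|\bar v(A)\|_2^2=\|\bE_{P_{\theta_0}}[X\mid X\in A]\|_2^2\leq I_0\cdot(1/2)/(1/2)=1$.

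\textbf{Part 2 ($|A|=2^{dR}$).} For achievability I would take $A$ to be a Hamming ball around $\mathbf 1=(1,\ldots,1)$ of radius $r$ with $r/d\to h_2^{-1}(R)$ (adjusting by a negligible number of elements so that $|A|=2^{dR}$ exactly, using $\binom{d}{\leq\rho d}=2^{d h_2(\rho)(1+o_d(1))}$). Coordinate-permutation symmetry forces $\bar v(A)=\alpha\mathbf 1$, and concentration of the conditional Hamming weight of $V\sim\mathrm{Unif}(A)$ near $r$ gives $\alpha=(1-2h_2^{-1}(R))(1+o_d(1))$. For the upper bound, the sign-flip symmetry $A\mapsto\{\sigma\odot v:v\in A\}$ preserves $\|\bar v\|_2$, so I would assume WLOG $\bar v_i\geq 0$. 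Letting $V\sim\mathrm{Unif}(A)$, each $V_i\in\{\pm 1\}$ has mean $\bar v_i$, hence $H(V_i)=h_2((1+\bar v_i)/2)$ in bits; entropy subadditivity gives
\[
\sum_{i=1}^d h_2\!\left(\frac{1+\bar v_i}{2}\right)\;\geq\;H(V)\;=\;\log_2|A|\;=\;dR,
\]
equivalently $\sum_i f(\bar v_i)\leq d(1-R)$ for $f(x)\triangleq 1-h_2((1+x)/2)$. Defining $\phi(s)\triangleq (f^{-1}(s))^2$ on the non-negative branch of $f^{-1}$, I would apply Jensen's inequality to the concave nondecreasing $\phi$ to conclude
\[
\|\bar v(A)\|_2^2=\sum_i\phi(f(\bar v_i))\leq d\,\phi\!\left(\tfrac1d\sum_i f(\bar v_i)\right)\leq d\,\phi(1-R)=d\bigl(1-2h_2^{-1}(R)\bigr)^2,
\]
where the last identity uses $h_2(p)=h_2(1-p)$. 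This matches achievability up to the $(1+o_d(1))$ factor.

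The main obstacle is the concavity of $\phi$, which reduces to showing $\alpha f''(\alpha)\geq f'(\alpha)$ on $[0,1]$. This follows cleanly from the power series expansions $f'(\alpha)=\frac{1}{\log 2}\sum_{j\geq 0}\alpha^{2j+1}/(2j+1)$ and $\alpha f''(\alpha)=\frac{1}{\log 2}\sum_{j\geq 0}\alpha^{2j+1}$, so the inequality holds termwise. The Hamming-ball asymptotics and the entropy bound are standard, and Lemma~\ref{lemma.geometry_2} is not sharp enough here (it would give $\sqrt{d(1-R)\cdot 2\log 2}$, which loses the tight constant for $R$ bounded away from $1$)---the entropy-subadditivity route is what produces the matching $\sqrt d(1-2h_2^{-1}(R))$.
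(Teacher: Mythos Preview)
Your proof is correct and follows essentially the same route as the paper. Part~1 matches exactly (half-space achievability plus Lemma~\ref{lemma.geometry_1}); Part~2 is the content of the proof of Lemma~\ref{lemma.hamming}, which the paper simply cites: entropy subadditivity yields $\sum_i h_2(p_i)\ge dR$, Jensen on the concave function $y\mapsto (1-2h_2^{-1}(y))^2$ gives the upper bound $d(1-2h_2^{-1}(R))^2$, and Hamming balls achieve it. Your $\phi(s)=(f^{-1}(s))^2$ is precisely this function after the substitution $s=1-y$, and your power-series verification of $\alpha f''(\alpha)\ge f'(\alpha)$ is a self-contained proof of the concavity that the paper instead imports from Lemma~\ref{lemma.h_2} (Wyner~1973).
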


Proposition \ref{prop.geometry} gives the exact maximum $\ell_2$ norm when $|A|=2^{d-1}$ and its asymptotic behavior on $d$ and $R$ as $d\to\infty$ when $|A|=2^{dR}$. We see that for $|A|=2^{d-1}$, the maximum $\ell_2$ norm is attained when $A$ is the half space (or the $d-1$ dimensional sub-cube), i.e., $A=\{x\in \Omega: x_1=1\}$. However, for relatively small $|A|=2^{dR}$, the maximum $\ell_2$ norm is nearly attained at spherical caps, i.e., $A=\{x\in \Omega: d_{\mathsf{Ham}}(x,x_0)\le t \}$ for any fixed $x_0\in \Omega$ and a proper radius $t$ such that $|A|=2^{dR}$. Hence, there are different behaviors for dense and sparse sets $A$.

\subsection{Comparison with Strong Data Processing Inequalities (SDPI)}

We compare our techniques with existing ones in establishing the lower bound for distributed parameter estimation problem. By Fano's inequality, the key step is to upper bound the mutual information $I(U;Y)$ under the Markov chain $U-X-Y$, where the link $U-X$ is dictated by the statistical model, and the link $X-Y$ is subject to the communication constraint $I(X;Y)\le k$. While trivially $I(U;Y)\le I(U;X)$ and $I(U;Y)\le I(X;Y)$, neither of these two inequalities are typically sufficient to obtain a good lower bound. A strong data processing inequality (SDPI)
\begin{align}\label{eq.SDPI}
	I(U;Y) \le \gamma^*(U,X) I(X;Y), \qquad \forall p_{Y|X}
\end{align}
with $\gamma^*(U,X)<1$ can be desirable. The SDPI may take different forms (e.g., for $f$-divergences), and it is applied in most works on distributed estimation, e.g., \cite{zhang2013information,braverman2016communication,xu2017information}. The SDPI-based approach turns out to be tight in certain models (e.g., the Gaussian model \cite{zhang2013information,braverman2016communication}), while it is also subject to some drawbacks:
\begin{enumerate}
	\item The tight constant $\gamma^*(U,X)$ is hard to obtain in general;
	\item The linearity of \eqref{eq.SDPI} in $I(X;Y)$ can only give a linear dependence of $I(U;Y)$ on $k$, which may not be tight. For example, in Corollary \ref{cor.multi} the optimal dependence on $k$ is exponential;
	\item The conditional distribution $p_{Y^*|X}$ achieving the equality in \eqref{eq.SDPI} typically leads to $I(X;Y^*)\to 0$, and \eqref{eq.SDPI} may be loose for $I(X;Y)=k$;
	\item The operational meaning of \eqref{eq.SDPI} is not clear, which may not result in a valid encoding scheme from $X$ to $Y$.
\end{enumerate}

In contrast to the linear dependence on $k$ using SDPI, our technique implies that the dependence on $k$ is closely related to the tail of the score function. It would be an interesting future direction to explore other dependence on $k$ (instead of linear or exponential) in other statistical models.  

\section{Acknowledgements}
We are grateful to Jayadev Acharya, Cl\'ement Canonne, Himanshu Tyagi, and Rishabh Dudeja for spotting an error in handling interactive communication protocols in the earlier version \cite{han2018geometric}, and communicating with us. In particular, we would like to thank Jayadev Acharya, Cl\'ement Canonne, and Himanshu Tyagi for pointing out their recent paper \cite{acharya2020interactive} studying discrete distribution estimation models under the sequential communication protocol, which motivates the current proof technique to fix the error and handle the blackboard communication protocol for general models. 

\appendix
\section{Auxiliary Lemmas}
\begin{lemma}\cite{mitzenmacher2005probability}\label{lemma.poissontail}
	For $X\sim \mathsf{Poi}(\lambda)$ or $X\sim \mathsf{B}(n,\frac{\lambda}{n})$ and any $\delta>0$, we have
	\begin{align*}
		\mathbb{P}(X\ge (1+\delta)\lambda) &\le \left(\frac{e^\delta}{(1+\delta)^{1+\delta}}\right)^\lambda \le \exp\left(-\frac{(\delta^2\wedge \delta)\lambda}{3}\right),\\
		\mathbb{P}(X\le (1-\delta)\lambda) &\le \left(\frac{e^{-\delta}}{(1-\delta)^{1-\delta}}\right)^\lambda \le \exp\left(-\frac{\delta^2\lambda}{2}\right).
	\end{align*}
\end{lemma}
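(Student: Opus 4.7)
The plan is to use the standard Chernoff bound based on exponential moments, together with the fact that the binomial MGF is dominated by the Poisson MGF, after which the simpler exponential forms follow by elementary calculus.

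First I would write, for any $t>0$,
\begin{align*}
\mathbb{P}(X\ge (1+\delta)\lambda) \le e^{-t(1+\delta)\lambda}\,\mathbb{E}[e^{tX}].
\end{align*}
For $X\sim \mathsf{Poi}(\lambda)$ the moment generating function equals $\exp(\lambda(e^t-1))$, while for $X\sim \mathsf{B}(n,\lambda/n)$ we have
\begin{align*}
\mathbb{E}[e^{tX}] = \left(1+\tfrac{\lambda}{n}(e^t-1)\right)^n \le \exp(\lambda(e^t-1)),
\end{align*}
using $1+x\le e^x$. Hence in both cases the same upper bound applies, and optimizing by setting $t=\log(1+\delta)$ (which is legal for $\delta>0$) yields the first stated inequality $(e^\delta/(1+\delta)^{1+\delta})^\lambda$. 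The lower-tail bound is obtained symmetrically: apply Markov to $e^{-tX}$ with $t>0$, use the same MGF domination, and optimize at $t=-\log(1-\delta)$ for $\delta\in(0,1)$ (the bound is trivial for $\delta\ge 1$ since the event is empty when $X\ge 0$).

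Next I would reduce the closed-form bounds to the claimed exponential form by analyzing the exponent. For the upper tail, I would show $(1+\delta)\log(1+\delta)-\delta\ge (\delta^2\wedge \delta)/3$ by splitting into the regimes $\delta\le 1$ and $\delta>1$: in the former, a Taylor comparison gives $(1+\delta)\log(1+\delta)-\delta\ge \delta^2/3$ (the function $f(\delta)=(1+\delta)\log(1+\delta)-\delta-\delta^2/3$ satisfies $f(0)=f'(0)=0$ and $f''(\delta)=\frac{1}{1+\delta}-\frac{2}{3}\ge 0$ on $[0,1]$); in the latter regime the inequality $(1+\delta)\log(1+\delta)-\delta\ge \delta/3$ follows from monotonicity after checking it at $\delta=1$. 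For the lower tail, I would analogously verify $-\delta-(1-\delta)\log(1-\delta)\ge \delta^2/2$ on $(0,1)$ by noting that $g(\delta)=-\delta-(1-\delta)\log(1-\delta)-\delta^2/2$ satisfies $g(0)=g'(0)=0$ and $g''(\delta)=\frac{1}{1-\delta}-1\ge 0$.

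The main obstacle is conceptual rather than technical: one has to observe that the binomial tail can be controlled by the Poisson MGF so that a single calculation handles both cases. The rest reduces to one-dimensional convexity checks on the Kullback--Leibler rate functions $(1+\delta)\log(1+\delta)-\delta$ and $-\delta-(1-\delta)\log(1-\delta)$. Since this is a textbook Chernoff computation (and the lemma is cited to Mitzenmacher--Upfal), I would merely sketch the exponent estimates and refer to the reference for full details.
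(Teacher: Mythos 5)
The paper offers no proof of this lemma; it is cited directly to Mitzenmacher and Upfal's textbook, and your Chernoff argument (Markov on the exponential moment, domination of the binomial MGF by the Poisson MGF via $1+x\le e^x$, and optimization of the exponent) is exactly the standard textbook route. So the approach is sound and matches what the reference does.

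Two small calculus slips, though, are worth flagging. In the upper-tail regime $\delta\le 1$ you assert $f''(\delta)=\frac{1}{1+\delta}-\frac23\ge 0$ on $[0,1]$, but this holds only for $\delta\le 1/2$; for $\delta\in(1/2,1]$ one must instead check directly that $f'(\delta)=\log(1+\delta)-\frac{2\delta}{3}$ stays nonnegative there (it does, since $f'$ is concave and $f'(1/2),f'(1)>0$), so the final inequality $f\ge 0$ is still correct but not by the reason you gave. In the lower-tail step there is a sign flip: what you need is $-\delta-(1-\delta)\log(1-\delta)\le -\delta^2/2$, equivalently $\delta+(1-\delta)\log(1-\delta)\ge \delta^2/2$, and it is this second function whose second derivative equals $\frac{1}{1-\delta}-1\ge 0$ (the function you actually wrote, $-\delta-(1-\delta)\log(1-\delta)-\delta^2/2$, has second derivative $-\frac{1}{1-\delta}-1<0$ and the inequality $\ge \delta^2/2$ you stated is false). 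With the signs corrected the convexity argument goes through as you intended.
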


\begin{lemma}\cite{wyner1973theorem}\label{lemma.h_2}
	For the binary entropy function $h_2(x)\triangleq -x\log_2x-(1-x)\log_2(1-x)$ on $[0,\frac{1}{2}]$, let $h_2^{-1}(y)$ be its inverse for $y\in [0,1]$. Then the function
	\begin{align*}
		f(y)= (1-2h_2^{-1}(y))^2
	\end{align*}
	is a decreasing concave function, with $f(y)\le 2\log 2\cdot (1-y)$ for all $y\in [0,1]$.
\end{lemma}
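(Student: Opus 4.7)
The plan is to verify the three assertions---monotonicity, concavity, and the linear upper bound---separately, using the substitution $p = h_2^{-1}(y) \in [0, 1/2]$. Monotonicity is essentially inspection: since $h_2'(p) = \log_2\frac{1-p}{p} > 0$ on $(0, 1/2)$, $h_2^{-1}$ is strictly increasing, so $1 - 2h_2^{-1}(y) \in [0,1]$ is strictly decreasing in $y$, and squaring a nonnegative decreasing quantity preserves monotonicity. The two substantive parts are the concavity and the linear bound.

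For concavity, I would compute $f''(y)$ by the chain rule, invoking $dp/dy = 1/h_2'(p)$ together with $h_2''(p) = -\frac{1}{p(1-p)\ln 2}$. A short calculation gives
\[
f''(y) \;=\; \frac{8}{h_2'(p)^2} + \frac{4(1-2p)\, h_2''(p)}{h_2'(p)^3},
\]
so since $h_2'(p) > 0$ on $(0, 1/2)$, the required $f''(y) \le 0$ reduces, after substituting the formulas above and clearing a positive factor, to the one-variable inequality
\[
g(p) \;\triangleq\; \frac{1-2p}{p(1-p)} - 2\ln\frac{1-p}{p} \;\ge\; 0, \qquad p \in (0, 1/2].
\]
The neat cancellation is that $g'(p) = -(1-2p)^2/(p^2(1-p)^2) \le 0$, so $g$ is non-increasing on $(0,1/2]$; combined with $g(1/2) = 0$, this gives $g \ge 0$ throughout, finishing the concavity.

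For the linear upper bound, I would recognize both sides as familiar quantities attached to the pair $\mathsf{Bern}(p)$ versus $\mathsf{Bern}(1/2)$. Direct expansion gives $1 - h_2(p) = \frac{1}{\ln 2}\, D(\mathsf{Bern}(p)\,\|\,\mathsf{Bern}(1/2))$, hence $2\log 2 \cdot (1-y) = 2\, D(\mathsf{Bern}(p)\,\|\,\mathsf{Bern}(1/2))$ in nats, while $f(y) = (1-2p)^2 = 4\,\mathsf{TV}(\mathsf{Bern}(p),\mathsf{Bern}(1/2))^2$. The desired inequality is therefore exactly Pinsker's $D(P\|Q) \ge 2\,\mathsf{TV}(P,Q)^2$ applied to these two Bernoulli laws; the prefactor $2\log 2$ is the natural-log signature of Pinsker.

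The main obstacle is the concavity step. The raw formula for $f''(y)$ looks opaque, and a naive sign check would involve comparing a logarithm to a rational function in $p$; the argument collapses only after one spots the factorization $g'(p) = -(1-2p)^2/(p^2(1-p)^2)$, which reduces the problem to a one-line monotonicity statement for $g$. Everything else---monotonicity of $f$, and the linear bound via Pinsker---is essentially immediate.
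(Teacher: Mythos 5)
The paper does not prove Lemma~\ref{lemma.h_2}; it simply cites \cite{wyner1973theorem} (Wyner--Ziv, where the concavity of $(1-2h_2^{-1}(y))^2$ underlies Mrs.\ Gerber's lemma). Your proof is therefore a from-scratch verification, and it is correct. I checked the calculus: with $p=h_2^{-1}(y)$, $h_2'(p)=\frac{1}{\ln 2}\ln\frac{1-p}{p}$ and $h_2''(p)=-\frac{1}{p(1-p)\ln 2}$, your expression
\[
f''(y)=\frac{8}{h_2'(p)^2}+\frac{4(1-2p)h_2''(p)}{h_2'(p)^3}
\]
is right, clearing the positive factor $h_2'(p)^3/4$ reduces $f''\le 0$ to $2h_2'(p)+(1-2p)h_2''(p)\le 0$, which after substitution is exactly $g(p)\ge 0$ with your $g$. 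The factorization $g'(p)=\frac{-2p^2+2p-1+2p(1-p)}{p^2(1-p)^2}=\frac{-(1-2p)^2}{p^2(1-p)^2}$ checks out, and $g(1/2)=0$ closes the concavity claim. The linear bound is precisely Pinsker's inequality $D(P\|Q)\ge 2\,\mathsf{TV}(P,Q)^2$ applied to $P=\mathsf{Bern}(p)$, $Q=\mathsf{Bern}(1/2)$: indeed $D=\ln 2\,(1-h_2(p))$ and $4\,\mathsf{TV}^2=(1-2p)^2$, so $2D\ge 4\,\mathsf{TV}^2$ is $2\log 2\,(1-y)\ge f(y)$ in the paper's natural-log convention. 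This Pinsker recognition is a particularly clean way to get the constant; the cited source derives the concavity as a step en route to Mrs.\ Gerber's lemma, but your direct second-derivative argument is self-contained and equally rigorous.
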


\section{Proof of Main Lemmas}

\subsection{Proof of Lemma \ref{lemma.total_weight}}
We prove a stronger result: for any strategy $\{a_v(\cdot)\}$, if each path from the root to any leaf node visits exactly $k_i$ internal nodes with label $i$ for each $i\in [n]$, then
\begin{align}\label{eq.stronger}
	\sum_{y\in \{0,1\}^{\sum_{i=1}^n k_i}} \prod_{v\in \tau(y), l_v\neq i} b_{v,y}(x_{l_v}) = 2^{k_i}
\end{align}
for any $\{x_j\}_{j\neq i}$. Clearly \eqref{eq.stronger} implies the lemma (i.e., with $k_i=0$ and $k_i=k$, respectively). 

We prove \eqref{eq.stronger} by induction on the depth $D=\sum_{i=1}^n k_i$ of the binary tree. The base case $D=0$ is obvious. To move from $D$ to $D+1$, distinguish into two cases and apply the induction hypothesis to the left/right tree of the root: 
\begin{enumerate}
	\item If the root node is labeled as $i$, then \eqref{eq.stronger} follows from $2^{k_i}=2^{k_i-1}+2^{k_i-1}$; 
	\item If the root node is not labeled as $i$, then \eqref{eq.stronger} follows from $2^{k_i} = 2^{k_i}a_{\text{root}}(x_i) + 2^{k_i}(1-a_{\text{root}}(x_i))$. 
\end{enumerate}

\subsection{Proof of Lemma \ref{lemma.geometry_1}}
As $\|x\|_2 = \max_{u: \|u\|_2=1} u^\top x$, it suffices to prove the same upper bounds of $\bE[u^\top X\mid A]^2$ for any unit vector $u\in \bR^d$. First, by the Cauchy--Schwarz inequality, we have
\begin{align*}
	\bE[u^\top X\mid A]^2 \le \bE[(u^\top X)^2 \mid A] \le \frac{\bE[(u^\top X)^2]}{P(A)} \le \frac{I_0}{P(A)}, 
\end{align*}
establishing the first inequality. The improved inequality when $\bE[X] = 0$ is due to
\begin{align*}
	I_0 &\ge \bE[(u^\top X)^2] \\
	&=  \bE[(u^\top X)^2\mathbbm{1}(X\in A)] +  \bE[(u^\top X)^2\mathbbm{1}(X\in A^c)] \\
	&\stepa{\ge} \frac{\bE^2[(u^\top X)\mathbbm{1}(X\in A)]}{P(A)} + \frac{\bE^2[(u^\top X)\mathbbm{1}(X\in A^c)]}{1-P(A)} \\
	&\stepb{=} \frac{\bE^2[(u^\top X)\mathbbm{1}(X\in A)]}{P(A)} + \frac{\bE^2[(u^\top X)\mathbbm{1}(X\in A)]}{1-P(A)} \\
	&= \frac{P(A)}{1-P(A)}\cdot \bE[u^\top X\mid A]^2, 
\end{align*}
where (a) is due to Cauchy--Schwarz, and (b) follows from the assumption $\bE[X] = 0$. 

%

\subsection{Proof of Lemma \ref{lemma.geometry_2}}
By the definition of the Orlicz $\psi_2$-norm, for any unit vector $u\in \bR^d$ we have
\begin{align*}
	2 &\ge \bE\left[ \exp\left(\frac{(u^\top X)^2}{\Sigma_0} \right) \right] \\
	&\ge P(A)\cdot \bE\left[\exp\left(\frac{(u^\top X)^2}{\Sigma_0} \right) \bigg\vert A\right] \\
	&\ge P(A)\cdot \exp\left(\frac{(u^\top \bE[X\mid A])^2}{\Sigma_0}\right),
\end{align*}
where the last inequality follows from the convexity of $x\mapsto \exp(cx^2)$ for any $c>0$. Consequently, we have $u^\top \bE[X\mid A]\le \Sigma_0\log(2/P(A))$ for all unit vectors $u\in \bR^d$, and the result follows. 

\subsection{Another Proof of Lemma \ref{lemma.geometry_2} in Gaussian Case}
We prove the following lemma:
\begin{lemma}\label{lemma.geometry_gaussian}
	For $X\sim \calN(0,I_d)$ and any measurable $A\subseteq \bR^d$, we have
	\begin{align*}
		\left\|\bE[X\mid A]\right\|_2^2 \le 2\cdot \log \frac{1}{\bP(A)}.
	\end{align*}
\end{lemma}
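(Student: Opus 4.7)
The plan is to reduce to a one-dimensional problem via rotational invariance of the Gaussian and then apply a standard Chernoff-style argument. Assuming $\bE[X|A] \neq 0$ (otherwise the bound is trivial), set the unit vector $v \triangleq \bE[X|A]/\|\bE[X|A]\|_2$ and $Y \triangleq \jiao{X,v}$. Then $Y \sim \calN(0,1)$ marginally (as a linear combination of i.i.d.\ standard Gaussians whose coefficients have unit $\ell_2$-norm), while by the very choice of $v$ one has the duality identity $\|\bE[X|A]\|_2 = \jiao{\bE[X|A],v} = \bE[Y \mid A]$. Hence the problem collapses to bounding $\bE[Y\mid A]^2$ for a scalar standard Gaussian $Y$ conditioned on an arbitrary event $A$.

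The desired bound then comes from Jensen's inequality combined with the Gaussian MGF. For every $\lambda > 0$,
\[
\exp\bigl(\lambda\,\bE[Y \mid A]\bigr) \;\le\; \bE\bigl[e^{\lambda Y} \mid A\bigr] \;=\; \frac{\bE[e^{\lambda Y}\mathbbm{1}_A]}{\bP(A)} \;\le\; \frac{\bE\, e^{\lambda Y}}{\bP(A)} \;=\; \frac{e^{\lambda^2/2}}{\bP(A)},
\]
where the first inequality is Jensen applied to $x \mapsto e^{\lambda x}$, the second uses $\mathbbm{1}_A \le 1$, and the last equality is the standard Gaussian MGF. Taking logarithms gives $\lambda\,\bE[Y\mid A] - \lambda^2/2 \le \log(1/\bP(A))$ for every $\lambda > 0$, and optimising by setting $\lambda = \bE[Y\mid A]$ yields the claimed $\bE[Y\mid A]^2 \le 2\log(1/\bP(A))$.

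An alternative, more ``information-theoretic'' route—closer in spirit to the tensor-power approach the paper alludes to—would first establish the analogous inequality on the Hamming cube: if $X \sim \mathsf{Unif}(\{\pm 1\}^d)$ and $V$ denotes $X$ conditioned on lying in $A$, then $V$ is uniform on $A$, so $\log_2|A| = H_2(V) \le \sum_i H_2(V_i) = \sum_i h_2(p_i)$ with $p_i = \bP(V_i = 1)$. Combining this with the pointwise inequality $(2p_i - 1)^2 \le 2\log 2 \cdot (1 - h_2(p_i))$ from Lemma \ref{lemma.h_2} and summing in $i$ gives $\|\bE[X|A]\|_2^2 \le 2\log(1/\bP(A))$ with the same constant. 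The Gaussian case then follows by applying this bound to $X^{(n)} \in \{\pm 1\}^{dn}$ with the event $A_n = \{n^{-1/2}\sum_k X_k \in A\}$, using the elementary identity $\|\bE[X^{(n)}|A_n]\|_2^2 = \|\bE[Z_n|A_n]\|_2^2$ (by symmetry over the $n$ blocks) and letting $n\to\infty$ via the multivariate CLT $Z_n \Rightarrow \calN(0,I_d)$—this is the tensor-power trick.

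The direct MGF proof has no real obstacle and is essentially one line; the only delicate step in the tensor-power route is the passage to the Gaussian limit, which requires uniform integrability of $Z_n\mathbbm{1}_{\{Z_n\in A\}}$ and a mild regularity condition on $A$ (e.g.\ a continuity set of $\gamma$). The sub-Gaussian tails of Rademacher partial sums make this routine, so neither obstacle is serious; I would present the Chernoff/MGF argument for brevity, and note the tensor-power derivation as a geometric shadow of the inequality on $\{\pm 1\}^d$.
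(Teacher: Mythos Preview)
Your proposal is correct. The primary Chernoff/MGF argument you give is a genuinely different route from the paper's proof of this particular lemma: the paper establishes the inequality first on the Hamming cube (your Lemma~\ref{lemma.hamming}) via entropy subadditivity and the concavity of $y\mapsto(1-2h_2^{-1}(y))^2$, and then lifts to the Gaussian by the tensor-power/CLT trick---exactly the alternative you sketch. Your direct argument is more elementary and reaches the sharp constant $2$ in one stroke; it is close in spirit to the paper's first proof of the general sub-Gaussian Lemma~\ref{lemma.geometry_2} (1D reduction plus convexity of $x\mapsto\exp(x^2/\sigma^2)$), but by using $e^{\lambda x}$ and optimising over $\lambda$ you recover the exact Gaussian constant rather than the looser $\psi_2$-norm factor that the Orlicz-norm argument would give. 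What the paper's tensor-power route buys, conversely, is the Hamming-cube inequality as an object of independent interest and a concrete witness that the constant $2$ is sharp (via Hamming balls as $d\to\infty$, $\epsilon\to 0$), which your MGF bound does not directly exhibit.
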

We split the proof into two steps: we first consider the uniform distribution on the binary hypercube, and then use the tensor power trick to reduce to the Gaussian case.

\subsubsection{Geometric inequality on binary hypercube}
We prove the following lemma:
\begin{lemma}\label{lemma.hamming}
	For $X\sim \mathsf{Unif}(\{\pm 1\}^d)$ and any non-negative function $a(\cdot)\in [0,1]$, we have
	\begin{align*}
		\left\|\frac{\bE[Xa(X)]}{\bE[a(X)]}\right\|_2^2 \le 2\cdot \log \frac{1}{\bE[a(X)]}
	\end{align*}
	Moreover, the dimension-free constant $2$ cannot be improved.
\end{lemma}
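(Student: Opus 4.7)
\noindent\textbf{Proof plan for Lemma \ref{lemma.hamming}.}

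The plan is to reduce the statement to a one-dimensional Pinsker-type inequality by exploiting the product structure of the uniform measure on $\{\pm 1\}^d$, using a tilted measure together with the tensorization (sub-additivity) of Shannon entropy. Concretely, let $P$ denote the uniform distribution on $\{\pm 1\}^d$ and define the tilted measure
\begin{align*}
Q(x) \;\triangleq\; \frac{a(x)}{2^d\, \bE_P[a(X)]}, \qquad x\in\{\pm 1\}^d,
\end{align*}
so that $\bE_P[X a(X)]/\bE_P[a(X)] = \bE_Q[Y]$ where $Y\sim Q$. Writing $\mu_i \triangleq \bE_Q[Y_i]\in[-1,1]$, the quantity to be controlled is $\|\mu\|_2^2 = \sum_{i=1}^d \mu_i^2$. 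First, I would observe that because $a(x)\in[0,1]$ implies $a(x)\log a(x)\le 0$, a direct computation gives
\begin{align*}
D(Q\|P) \;=\; \frac{1}{\bE_P[a(X)]}\,\bE_P\!\left[a(X)\log a(X)\right] \;-\; \log \bE_P[a(X)] \;\le\; \log\frac{1}{\bE_P[a(X)]}.
\end{align*}
So it suffices to prove the strictly stronger inequality $\|\mu\|_2^2 \le 2D(Q\|P)$, which has the advantage of being invariant under the reduction to the indicator-function case.

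Next I would tensorize. Since $P$ is the product of $d$ uniform Bernoullis, $D(Q\|P) = d\log 2 - H(Y)$, and sub-additivity of entropy gives $H(Y)\le \sum_{i=1}^d H(Y_i)$. Therefore
\begin{align*}
D(Q\|P) \;\ge\; \sum_{i=1}^d \left(\log 2 - H(Y_i)\right) \;=\; \sum_{i=1}^d D\!\left(\mathrm{Bern}\!\left(\tfrac{1+\mu_i}{2}\right)\,\Big\|\,\mathrm{Bern}\!\left(\tfrac{1}{2}\right)\right).
\end{align*}
Each summand on the right is a one-dimensional KL divergence between two Bernoullis whose total variation distance is $|\mu_i|/2$, so by Pinsker's inequality each summand is at least $\mu_i^2/2$. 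Summing these bounds yields $D(Q\|P)\ge \|\mu\|_2^2/2$, which combined with the KL bound above gives the claimed inequality. I do not anticipate any serious obstacle in these steps: the only mildly non-trivial ingredients are Pinsker for Bernoullis (which can also be checked by a direct second-derivative argument showing $\tfrac{1+p}{2}\log(1+p)+\tfrac{1-p}{2}\log(1-p)\ge p^2/2$), and the reduction via $a\log a\le 0$.

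For the sharpness of the constant $2$, I would appeal to Hamming balls. Taking $a(x)=\mathbbm{1}\{\sum_i x_i \ge t\}$ and choosing $t$ so that $\bE_P[a(X)] = 2^{-d(1-R)}$ for fixed $R\in(0,1)$, a Stirling/large-deviations computation (essentially Proposition \ref{prop.geometry} of the excerpt) gives $\|\mu\|_2^2 \sim d\bigl(1-2h_2^{-1}(R)\bigr)^2$ as $d\to\infty$, while the right-hand side of Lemma \ref{lemma.hamming} is $2d(1-R)\log 2$. By Lemma \ref{lemma.h_2}, the ratio $(1-2h_2^{-1}(R))^2/\bigl(2\log 2\cdot(1-R)\bigr)$ tends to $1$ as $R\uparrow 1$, so the constant $2$ cannot be improved. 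The main subtlety here, and the one place I would need to be careful, is getting the sharp asymptotics of $\|\mu\|_2^2$ for the Hamming ball — this is a standard but slightly delicate computation involving the behavior of $h_2$ near $1/2$, namely $1-h_2((1+p)/2)= p^2/(2\log 2)+O(p^4)$, which is precisely what makes the two sides match asymptotically.
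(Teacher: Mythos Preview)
Your proof is correct and follows essentially the same approach as the paper: define the tilted measure $Q\propto a$, use sub-additivity of entropy together with the bound coming from $a\le 1$, and invoke the one-dimensional inequality $(1-2p)^2\le 2\bigl(\log 2 - H(\mathrm{Bern}(p))\bigr)$ (which is exactly Pinsker for Bernoullis, and is the content of the last inequality in Lemma~\ref{lemma.h_2}); the tightness argument via Hamming balls is also the same. The only cosmetic difference is that the paper first applies Jensen to the concave function $y\mapsto (1-2h_2^{-1}(y))^2$ before using the linear bound, whereas you apply Pinsker coordinate-wise and sum --- your route is slightly more direct and in fact isolates the cleaner intermediate inequality $\|\mu\|_2^2\le 2D(Q\|P)$.
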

\begin{proof}
	Define a new probability measure $Q(\cdot)$ on the binary hypercube $\{\pm 1\}^d$ with $Q(y)\propto a(y)$, and let $Y\sim Q$. Let $p_i\triangleq \bP(Y_i=1)$ for $i\in [d]$, then
	\begin{align*}
		\left\|\frac{\bE [Xa(X)]}{\bE [a(X)]}\right\|_2^2 = \|\bE Y\|_2^2 = \sum_{i=1}^d (\bE Y_i)^2 = \sum_{i=1}^d (1-2p_i)^2.
	\end{align*}
	Recall the definition of $h_2(\cdot)$ in Lemma \ref{lemma.h_2}. Define $q_i\triangleq h_2(p_i)$, the concavity in Lemma \ref{lemma.h_2} gives
	\begin{align*}
		\left\|\frac{\bE [Xa(X)]}{\bE [a(X)]}\right\|_2^2 = \sum_{i=1}^d (1-2h_2^{-1}(q_i))^2 \le d\left(1-2h_2^{-1}\left(\frac{1}{d}\sum_{i=1}^d q_i\right)\right)^2.
	\end{align*}
	
	On the other hand, by the subadditivity of Shannon entropy,
	\begin{align*}
		\sum_{i=1}^d q_i = \frac{1}{\log 2}\sum_{i=1}^d H(Y_i) &\ge \frac{H(Y)}{\log 2}= d-\bE\left[\log_2 \frac{a(Y)}{\bE[a(X)]}\right] \\
		&\ge d-\bE\left[\log_2 \frac{1}{\bE[a(X)]}\right] = d-\log_2 \frac{1}{\bE[a(X)]}.
	\end{align*}
	Hence, applying the decreasing property and the last inequality in Lemma \ref{lemma.h_2}, we have
	\begin{align*}
		\left\|\frac{\bE [Xa(X)]}{\bE [a(X)]}\right\|_2^2&\le d\left(1-2h_2^{-1}\left(1-\frac{1}{d}\log_2\frac{1}{\bE[a(X)]}\right)\right)^2 \\
		&\le d\cdot 2\log 2\cdot \frac{1}{d}\log_2\frac{1}{\bE[a(X)]}\\
		& = 2\log\frac{1}{\bE[a(X)]}.
	\end{align*}
	
	To show that $2$ is the best possible constant, pick $a(x)=\mathbbm{1}_B(x)$ where $B$ is the Hamming ball with center ${\bf 1}$ and radius $\epsilon d$. Direct computation gives the constant $2$ as $d\to\infty$ and $\epsilon\to 0$.
\end{proof}

\subsubsection{Tensor Power Trick}
Next we make use of Lemma \ref{lemma.hamming} to prove the Gaussian case. We apply the so-called \emph{tensor power trick}: we lift the dimension by making $B$ independent copies, and apply CLT to move to the Gaussian case as $B\to\infty$. This idea has been widely used in harmonic analysis and high-dimensional geometry, e.g., to prove the isoperimetric inequality for the Gaussian measure \cite{ledoux2005concentration}. 

Here the trick goes: fix any dimension $d$ and any function $a(\cdot)\in [0,1]$ defined on $\bR^d$. By a suitable approximation we may assume that $a(\cdot)$ is continuous. Now for any $B>0$, we define a new function $\tilde{a}(\cdot)$ on $\{\pm 1\}^{dB}$ as follows:
\begin{align*}
	\tilde{a}(X)=\tilde{a}(\{X_{i,j}\}_{i\in [d],j\in [B]}) \triangleq a\left(\frac{\sum_{j=1}^n X_{1,j}}{\sqrt{B}},\cdots,\frac{\sum_{j=1}^n X_{d,j}}{\sqrt{B}}\right).
\end{align*}
By symmetry, we have
\begin{align*}
	\|\bE [X\tilde{a}(X)]\|_2^2 = \sum_{i=1}^d \left(\bE\left[\frac{\sum_{j=1}^B X_{i,j}}{\sqrt{B}}a\left(\frac{\sum_{j=1}^n X_{1,j}}{\sqrt{B}},\cdots,\frac{\sum_{j=1}^n X_{d,j}}{\sqrt{B}}\right)\right]\right)^2.
\end{align*}
Moreover, by Lemma \ref{lemma.hamming}, we have
\begin{align}\label{eq.uniform}
	\left\|\frac{\bE [X\tilde{a}(X)]}{\bE [\tilde{a}(X)]}\right\|_2^2 \le 2\cdot \log \frac{1}{\bE[\tilde{a}(X)]}.
\end{align}

Let $Z\sim \calN(0,I_d)$, then CLT gives $\|\bE [X\tilde{a}(X)]\|_2^2 \to \|\bE [Za(Z)]\|_2^2$ and $\bE[\tilde{a}(X)]\to \bE[a(Z)]$ as $B\to\infty$. Hence, as $B\to\infty$, \eqref{eq.uniform} becomes
\begin{align}\label{eq.gaussian}
	\left\|\frac{\bE [Za(Z)]}{\bE [{a}(Z)]}\right\|_2^2 \le 2\cdot \log \frac{1}{\bE[a(Z)]}.
\end{align}
Note that \eqref{eq.gaussian} holds for all $d$ and $a(\cdot)$, the proof of Lemma \ref{lemma.geometry_gaussian} is complete by choosing $a(\cdot)=\mathbbm{1}_A(\cdot)$.

\section{Proof of Propositions}

\subsection{Proof of Proposition \ref{prop.assumption}}
\noindent\textbf{Product Bernoulli models.} We begin with the first product Bernoulli model. For any $\delta \in (0,1/6)$, we choose $\theta_u = (1/2,\cdots,1/2) + \delta u\in \Theta$ for all $u\in \{\pm 1\}^d$. Clearly the regular grid condition and the separation condition hold. For the likelihood ratio condition, note that 
\begin{align*}
	\frac{ dP_{\theta_{u^{\oplus j}}} }{dP_{\theta_u}}(x) = \frac{1-2\delta u_j}{1+2\delta u_j}\cdot \mathbbm{1}(x_j = 1) + \frac{1+2\delta u_j}{1-2\delta u_j}\cdot \mathbbm{1}(x_j = 0) \ge \frac{1}{2}
\end{align*}
by the choice of $\delta$. Moreover, the $j$-th component of the score function at $\theta_u$ is
\begin{align*}
	[S_{\theta_u}(x)]_j = \frac{2}{1+2\delta u_j}\cdot \mathbbm{1}(x_j = 1) - \frac{2}{1-2\delta u_j}\cdot \mathbbm{1}(x_j = 0), 
\end{align*}
therefore \eqref{eq:approximation} is satisfied with $\varepsilon \equiv 0$. 
\vspace{1em}

\noindent\textbf{Multinomial models.} For $d_0 = d/2$, consider the following construction known as the Paninski's construction \cite{paninski2008coincidence}: 
\begin{align*}
	\theta_u = \left(\frac{1}{d}-\frac{\delta u_1}{2}, \frac{1}{d}+\frac{\delta u_1}{2}, \cdots, \frac{1}{d}-\frac{\delta u_{d_0}}{2}, \frac{1}{d}+\frac{\delta u_{d_0}}{2}\right). 
\end{align*}
After proper permutation of the rows, it is easy to see that the matrix $M_u$ is $\delta\cdot [\text{diag}(v)\ \text{diag}(v)]^\top$ for some $v\in \{\pm 1\}^{d_0}$. Consequently, the operator norm of this matrix is $\sqrt{2}\delta$, which is smaller than $2\delta$. Also, after simple algebra, the separation condition \eqref{eq:separation} is fulfilled with $\kappa(\delta) = 2^{2-p}\delta^p$. For the likelihood ratio condition, note that
\begin{align*}
	\frac{ dP_{\theta_{u^{\oplus j}}} }{dP_{\theta_u}}(x) = \frac{2+d\delta u_j}{2-d\delta u_j}\cdot \mathbbm{1}(x = 2j-1) + \frac{2-d\delta u_j}{2+d\delta u_j}\cdot \mathbbm{1}(x = 2j) \ge \frac{1}{2}
\end{align*}
as $\delta \in (0,1/(2d))$. Moreover, although there is some ambiguity in defining the score function for the Multinomial model (depending on the choice of free parameters), the inner product $(\theta_{u^{\oplus j}} - \theta_u)^\top S_{\theta_u}(x)$ is well-defined and expressed as 
\begin{align*}
	(\theta_{u^{\oplus j}} - \theta_u)^\top S_{\theta_u}(x) = \frac{\delta u_j}{1/d-\delta u_j/2}\cdot \mathbbm{1}(x=2j-1) - \frac{\delta u_j}{1/d+\delta u_j/2}\cdot \mathbbm{1}(x=2j). 
\end{align*}
Therefore, \eqref{eq:approximation} holds with $\varepsilon\equiv0$. The product Bernoulli model is handled analogously. 
\vspace{1em}

\noindent\textbf{Gaussian location models.} Choose $\theta_u = \delta u\in \bR^d$ for all $u\in \{\pm 1\}^d$, then clearly the regular grid condition and the separation condition hold. Let $\calX_0 = \{x\in \bR^d: \|x\|_\infty \le (C\sqrt{\log(nd)}+1)\sigma \}$, then for a large enough constant $C>0$, using the Gaussian tail and the union bound gives that $P_{\theta_u}(\calX_0)\ge 1-o(n^{-1})$ for any $u\in \{\pm 1\}^d$ and $\delta < \sigma$. For the likelihood ratio condition, we have
\begin{align*}
	\frac{ dP_{\theta_{u^{\oplus j}}} }{dP_{\theta_u}}(x) = \exp\left(-\frac{2\delta u_j x_j}{\sigma^2} \right) \ge \frac{1}{2}, \quad \forall x\in \calX_0
\end{align*}
as $|\delta|\le c\sigma/\sqrt{\log(nd)}$ for a small enough constant $c>0$ and $|x_j|\le (C\sqrt{\log(nd)}+1)\sigma$. Moreover, $S_{\theta_u}(x) = (x-\theta_u)/\sigma^2$, and therefore 
\begin{align*}
	\left|\frac{ dP_{\theta_{u^{\oplus j}}} }{dP_{\theta_u}}(x) - 1 - (\theta_{u^{\oplus j}} - \theta_u)^\top S_{\theta_u}(x) \right| &= \left|\exp\left(-\frac{2\delta u_j x_j}{\sigma^2} \right) - 1 - \frac{2\delta u_j(\delta u_j- x_j)}{\sigma^2} \right| \\
	&= \left|\exp\left(-\frac{2\delta^2}{\sigma^2} - \frac{2\delta u_j z_j}{\sigma}\right) - 1 + \frac{2\delta u_j z_j}{\sigma} \right|, 
\end{align*}
with $x_j \triangleq \sigma z_j + \delta u_j$. Note that when $x\sim P_{\theta_u}$, we have $z_j\sim \calN(0,1)$, and therefore the above term has an explicit second moment as
\begin{align*}
	\bE_{X\sim P_{\theta_u}}\left[\left|\frac{ dP_{\theta_{u^{\oplus j}}} }{dP_{\theta_u}}(X) - 1 - (\theta_{u^{\oplus j}} - \theta_u)^\top S_{\theta_u}(X) \right|^2\right] = \exp\left(\frac{4\delta^2}{\sigma^2}\right) - 1 - \frac{4\delta^2}{\sigma^2},
\end{align*}
which is $\varepsilon^2$ with $\varepsilon = O(\delta^2/\sigma^2)$ as $\delta = O(\sigma)$. Hence, the Gaussian location model is approximately regular with $\varepsilon(\delta) = O(\delta^2/\sigma^2)$. 
\vspace{1em}

\noindent\textbf{Logistic regression models with random design.} Choose $\theta_u = \delta u\in \bR^d$ for all $u\in \{\pm 1\}^d$, then clearly the regular grid condition and the separation condition hold. For the likelihood ratio condition, we choose
\begin{align*}
	\calX_0 = \{(z,y): \|z\|_\infty \le C\sqrt{\log(nd)}, y\in \{0,1\} \}, 
\end{align*}
and for a large enough constant $C>0$ we have $P_{\theta}(\calX_0)\ge 1-\alpha$ with $\alpha= o(n^{-1})$ for any $\theta\in \bR^d$. We first show that for any fixed $y\in \{0,1\}$, taking only the expectation with respect to $z\sim \calN(0,I_d)$ satisfies \eqref{eq:approximation_weak}. By symmetry, we shall only consider the case $y=1$, where $X=(z,1)$ and 
\begin{align*}
	\left|\frac{dP_{\theta_{u^{\oplus j}}}}{dP_{\theta_u}}(z,1) - 1 - (\theta_{u^{\oplus j}} - \theta_u)^\top S_{\theta_u}(z,1)  \right| = \left|\frac{e^{-\theta^\top z}+1}{e^{-\theta^\top z}\cdot e^{2\delta u_jz_j}+1} - 1 + 2\delta u_jz_j\cdot \frac{e^{-\theta^\top z}}{e^{-\theta^\top z} + 1 }\right|. 
\end{align*}
We will prove that for any $A\ge 0$ and $t\in [-1/2,1/2]$, it holds that
\begin{align}\label{eq:algebra}
	\left|\frac{A+1}{Ae^t+1} - 1 + \frac{At}{A+1} \right| \le 2t^2. 
\end{align}
In fact, if \eqref{eq:algebra} holds, the choice of $\delta\in (0,1/\sqrt{d})$ satisfies $|2\delta u_jz_j|\le 1/2$ for any $X=(z,y)\in \calX_0$ with large $d$. Now choosing $A=e^{-\theta^\top z}$ and $t=2\delta u_jz_j\in [-1/2,1/2]$ in \eqref{eq:algebra} gives the desired inequality \eqref{eq:approximation_weak} with $\varepsilon=O(\delta^2)$. 

Next we prove the inequality \eqref{eq:algebra}. After simple algebra, it is equivalent to prove that
\begin{align*}
	\frac{A}{A+1}\cdot \left|\frac{[(t-1)e^t+1]A + (1+t-e^t)}{Ae^t+1}\right| \le 2t^2. 
\end{align*}
Since $|t|\le 1/2$, it is easy to verify that $|(t-1)e^t+1|\le t^2$ and $|1+t-e^t|\le t^2$, and clearly $e^t\ge 1/2$. Consequently, the above inequality holds, and so is \eqref{eq:algebra}. 

Finally, we verify $dP_{\theta_{u^{\oplus j}}}/dP_{\theta_u}(x)\ge 1/2$ for all $x\in \calX_0$. Using the above notations $A=e^{-\theta^\top z}$ and $t=2\delta u_jz_j\in [-1/2,1/2]$ again, this quantity is
\begin{align*}
	\frac{A+1}{Ae^t+1} \ge \min\left\{1, e^{-t} \right\} \ge \frac{1}{2},
\end{align*}
as desired.

\subsection{Proof of Proposition \ref{prop.geometry}}
Let $X$ follow the uniform distribution on $\Omega$, then $\bar{v}=\bE[X\mid A]$. As $X$ has independent coordinates each of which has a unit second moment, the assumption of Lemma \ref{lemma.geometry_1} is fulfilled with $I_0=1$. By Lemma \ref{lemma.geometry_1}, for $|A|=2^{d-1}$ we have
\begin{align*}
	\|\bE[X\mid A]\|_2 \le 1\cdot \frac{\bP(A)}{1-\bP(A)} = 1, 
\end{align*}
establishing the first inequality. Similarly, the second inequality follows from Lemma \ref{lemma.hamming} (and its proof). 

\section{Proof of Theorem \ref{thm.sparse}}
As the hypothesis class for sparse Gaussian models is typically not cube-like, we use the following Fano's inequality instead of the Assouad's lemma to establish the lower bound. The present form is taken from \cite[Corollary 1]{duchi2013distance}; see also \cite{chen2016bayes} and \cite[Theorem 8]{han2019online} for a general statement. 
\begin{lemma}\label{lemma.fano}
	Let random variables $V$ and $\widehat{V}$ take value in $\calV$, $V$ be uniform on some finite alphabet $\calV$, and $V-X-\widehat{V}$ form a Markov chain. Let $d$ be any metric on $\calV$, and for $t>0$, define
	\begin{align*}
		N_{\max}(t) &\triangleq \max_{v\in \calV} |v'\in V: d(v,v')\le t|, \\
		N_{\min}(t) &\triangleq \min_{v\in \calV} |v'\in V: d(v,v')\le t|.
	\end{align*}
	If $N_{\max}(t)+N_{\min}(t)<|\calV|$, the following inequality holds:
	\begin{align*}
		\mathbb{P}(d(V,\widehat{V})>t) \ge 1 - \frac{I(V;X)+\log 2}{\log\frac{|\calV|}{N_{\max}(t)}}.
	\end{align*}
\end{lemma}

We construct the following family of hypotheses: let $U\in\bR^d$ be uniformly distributed on the finite set
\begin{align*}
	\calU = \{\theta\in \{0,\pm 1\}^d : \|\theta\|_0=s\}.
\end{align*}
Clearly $|\calU|=2^s\binom{d}{s}$. For $u\in\calU$ we associate with the Gaussian distribution $P_u \triangleq \calN(\delta u,\sigma^2 I_d)$, draw $n$ i.i.d. observations $X=(X_1,\cdots,X_n)$ from $P_u$, and obtain the transcript $Y=(Y_1,\cdots,Y_n)\in \{0,1\}^{nk}$, where $Y_i\in \{0,1\}^k$ denotes the transcript from node $i$ under the simultaneous message passing protocol. Choosing $t=s/5$ in Lemma \ref{lemma.fano}, we have
\begin{align*}
	\left|\left\{u'\in\calU: d_{\mathsf{Ham}}(u,u')\le \frac{s}{5}\right\}\right| = \sum_{u+v\le \frac{s}{5}} \binom{s}{u}\binom{s-u}{v}\binom{d-s}{v} \le \left(\frac{s}{5}+1\right)^2\cdot \binom{s}{s/5}^2\binom{d}{s/5}.
\end{align*}
As a result, we have $\log \frac{|\calU|}{N_{\max}(s/5)}\ge cs\log\frac{d}{s}$ for some constant $c>0$, and Lemma \ref{lemma.fano} gives
\begin{align}\label{eq.sparse_lower_bound}
	\inf_{\Pi_{\mathsf{BB}}}\inf_{\widehat{\theta}}\sup_{\theta\in\Theta} \bE_\theta\|\widehat{\theta}-\theta\|_2^2 \ge \frac{s\delta^2}{10}\left(1-\frac{I(U;Y)+\log 2}{cs\log(d/s)}\right).
\end{align}

To lower bound \eqref{eq.sparse_lower_bound}, we seek an upper bound of $I(U;Y_i)$ for each $i\in [n]$. Under the simultaneous message passing protocol, the communication strategy of node $i$ could be represented by a family of non-negative functions $p_{i,y}(\cdot)$ with $y\in \{0,1\}^k$, where
\begin{align*}
	p_{i,y}(x) \triangleq \bP[Y_i = y \mid X_i = x]. 
\end{align*}
Clearly $\sum_{y\in \{0,1\}^k} p_{i,y}(x) = 1$ for all $x\in \bR^d$. Moreover, 
\begin{align*}
	\bP[Y_i=y \mid U = u] = \bE_{X_i\sim P_u}[p_{i,y}(X_i)]. 
\end{align*} 
Let $P_0\triangleq \calN(0,\sigma^2 I_d)$, we could upper bound the mutual information as
\begin{align*}
	I(U;Y_i) &\stepa{\le} \bE_U[D_{\mathsf{KL}}(P_{Y_i|U} \| P_{Y_i|U=0})] \\
	&\stepb{\le} \bE_U[\chi^2(P_{Y_i|U} \| P_{Y_i|U=0})] \\
	&\stepc{=} \sum_{y\in \{0,1\}^k}\bE_U\left[\frac{\bE_{X_i\sim P_0}^2[p_{i,y}(X_i)s_U(X_i)]}{\bE_{X_i\sim P_0}[p_{i,y}(X_i)]} \right] \\
	&\stepd{\le} \frac{2\delta^2}{\sigma^4} \sum_{y\in \{0,1\}^k} \bE_U\left[\frac{\bE_{X_i\sim P_0}^2[p_{i,y}(X_i)\cdot U^\top X_i]}{\bE_{X_i\sim P_0}[p_{i,y}(X_i)]} \right] + 2\sum_{y\in \{0,1\}^k}\bE_U\left[\frac{\bE_{X_i\sim P_0}^2[p_{i,y}(X_i)\varepsilon_U(X_i)]}{\bE_{X_i\sim P_0}[p_{i,y}(X_i)]} \right],
\end{align*}
where (a) is due to the variational representation of the mutual information $$I(X;Y) = \min_{Q_Y}\bE_X[D_{\mathsf{KL}}(P_{Y|X}\|Q_Y)],$$ (b) uses the fact that the KL divergence is upper bounded by the $\chi^2$ divergence, (c) follows from simple algebra with
\begin{align*}
	s_U(x) \triangleq \frac{dP_U}{dP_0}(x) - 1 = \exp\left(\frac{\delta\cdot U^\top x}{\sigma^2} - \frac{\delta^2 s}{2\sigma^2}\right) - 1, 
\end{align*}
and (d) uses the triangle inequality $(a+b)^2\le 2(a^2+b^2)$ with
\begin{align*}
	\varepsilon_U(x) \triangleq \exp\left(\frac{\delta\cdot U^\top x}{\sigma^2} - \frac{\delta^2 s}{2\sigma^2}\right) - 1 - \frac{\delta\cdot U^\top x}{\sigma^2}. 
\end{align*}

Next we upper bound each term separately. For the remainder term, in view of the identity
\begin{align*}
	\bE_{X\sim P_0}[\varepsilon_U(X)^2] = \exp\left(\frac{\delta^2 s}{\sigma^2} \right) - 1 - \frac{\delta^2 s}{\sigma^2} = O\left(\frac{\delta^4s^2}{\sigma^4}\right)
\end{align*}
as long as $\delta = O(\sigma/\sqrt{s})$, the Cauchy--Schwarz inequality gives
\begin{align}\label{eq:remainder}
	\sum_{y\in \{0,1\}^k}\bE_U\left[\frac{\bE_{X_i\sim P_0}^2[p_{i,y}(X_i)\varepsilon_U(X_i)]}{\bE_{X_i\sim P_0}[p_{i,y}(X_i)]} \right] \le \bE_{U}\bE_{X_i\sim P_0}\left[\sum_{y\in \{0,1\}^k} p_{i,y}(X_i)\varepsilon_U(X_i)^2 \right] = O\left(\frac{\delta^4s^2}{\sigma^4}\right).
\end{align}

As for the main term, we have
\begin{align}\label{eq:main_term}
	\sum_{y\in \{0,1\}^k} \bE_U\left[\frac{\bE_{X_i\sim P_0}^2[p_{i,y}(X_i)\cdot U^\top X_i]}{\bE_{X_i\sim P_0}[p_{i,y}(X_i)]} \right] &\stepa{=} \frac{s}{d}\sum_{y\in \{0,1\}^k} \frac{\| \bE_{X_i\sim P_0}[p_{i,y}(X_i)X_i] \|_2^2}{\bE_{X_i\sim P_0}[p_{i,y}(X_i)]} \nonumber \\
	&\stepb{\le} \frac{2s\sigma^2}{d}\sum_{y\in \{0,1\}^k} \bE_{X_i\sim P_0}[p_{i,y}(X_i)]\log \frac{1}{\bE_{X_i\sim P_0}[p_{i,y}(X_i)]} \nonumber\\
	&\stepc{\le} \frac{2s\sigma^2}{d}\cdot k,
\end{align}
where (a) follows from $\bE[UU^\top] = (s/d)I_d$, (b) follows from Lemma \ref{lemma.geometry_2} (or more precisely, Lemma \ref{lemma.geometry_gaussian}), and (c) uses the concavity of $x\mapsto \log x$. Now combining \eqref{eq:remainder} and \eqref{eq:main_term} gives the following upper bound on the mutual information: 
\begin{align*}
	I(U;Y_i) = O\left(\frac{sk\delta^2}{d\sigma^2} + \frac{s^2\delta^4}{\sigma^4} \right). 
\end{align*}
Without loss of generality we may assume that there is no public randomness (otherwise we use $I(U;Y)\le I(U;Y|R)$ for external randomness $R$ and repeat the previous arguments). Consequently, $(Y_1,\cdots,Y_n)$ are conditionally independent given $U$, and therefore
\begin{align}\label{eq:mutual_info}
	I(U;Y) \le \sum_{i=1}^n I(U;Y_i) =  O\left(\frac{nsk\delta^2}{d\sigma^2} + \frac{ns^2\delta^4}{\sigma^4} \right). 
\end{align}
Finally, choosing $\delta^2\asymp d\sigma^2\log(d/s)/(nk)$ in \eqref{eq.sparse_lower_bound} and \eqref{eq:mutual_info} completes the proof of Theorem \ref{thm.sparse} for $k\le d$ (also recall our choice of $n$). The case $k>d$ simply follows from the centralized minimax risk and is thus omitted. 

%

\bibliographystyle{alpha}
\bibliography{di,ref}

\end{document}